\newtheorem{theorem}{Theorem}
\newtheorem{proposition}[theorem]{Proposition}
\newtheorem{lemma}[theorem]{Lemma}
\newtheorem{definition}[theorem]{Definition}
\newtheorem{cor}[theorem]{Corollary}
\def\beq{\begin{equation}}
\def\eeq{\end{equation}}
\def\bea{\begin{eqnarray}}
\def\eea{\end{eqnarray}}
\def\benpf{\noindent {\textbf{{\emph{Proof.}}\;}}}
\def\endpf{\hfill$\blacksquare$\medskip}
\let\expandafter
\def\subeqnarray{\arraycolsep1pt
   \def\@eqnnum\stepcounter##1{\stepcounter{subequation}
       {\reset@font\rm(\theequation\alph{subequation})}}
\jot5mm     \eqnarray}
\newcommand{\bbR}{{\mathbb R}}
\def\ep{\varepsilon}
\def\epsilon{\varepsilon}
\def\t{\widetilde}
\def\endpf{\hfill$\square$\medskip}
\newbox\meibox
\def\placeunder#1#2#3#4{\setbox\meibox%
\vbox{\hbox{\hskip#4$\hphantom{#2}$}\hbox{$\hphantom{#1}$}}%
\vtop{\baselineskip=0pt\lineskiplimit=\baselineskip%
\lineskip=#3\hbox to \wd\meibox{\hfil\hskip#4$#2$\hfil}%
\hbox to \wd\meibox{\hfil$#1$\hfil}}}
\def\intprod{\mathbin{\hbox to 6pt{%
                 \vrule height0.4pt width5pt depth0pt
                 \kern-.4pt
                 \vrule height6pt width0.4pt depth0pt\hss}}}
\begin{document}
%%%%%%%%%%%%%%%%%%%%%%
\title[Commuting systems of integrable symplectic birational maps]
{A construction of commuting systems \\ of integrable symplectic birational maps}
%%%%%%%%%%%%%%%%%%%%%%

\author{Matteo Petrera \and Yuri B. Suris }

\thanks{E-mail: {\tt  petrera@math.tu-berlin.de, suris@math.tu-berlin.de}}

\maketitle

\begin{center}
{\footnotesize{
Institut f\"ur Mathematik, MA 7-1\\
Technische Universit\"at Berlin, Str. des 17. Juni 136,
10623 Berlin, Germany
}}
\end{center}

%%%%%%%%%%%%%%%%%%%%%%
%%%%%%%%%%%%%%%%%%%%%%

\begin{abstract}
We give a construction of completely integrable $(2m)$-dimensional Hamiltonian systems with cubic Hamilton functions. The construction depends on a constant skew-Hamiltonian matrix $A$, that is, a matrix satisfying $A^{\rm T}J=JA$, where $J$ is a non-degenerate skew-symmetric matrix defining the standard symplectic structure on the phase space $\mathbb R^{2m}$.
Applying to any such system the so called Kahan-Hirota-Kimura discretization scheme, we arrive at a birational $(2m)$-dimensional map. We show that this map is symplectic with respect to a symplectic structure that is a perturbation of the standard symplectic structure on $\mathbb R^{2m}$, and possesses $m$ independent integrals of motion, which are perturbations of the original Hamilton functions and are in involution with respect to the invariant symplectic structure. Thus, this map is completely integrable in the Liouville-Arnold sense. Moreover, under a suitable normalization of the original $m$-tuples of commuting vector fields, their Kahan-Hirota-Kimura discretizations also commute and share the invariant symplectic structure and the $m$ integrals of motion.
\end{abstract}

%%%%%%%%%%%%%%%%%%%%%%
%%%%%%%%%%%%%%%%%%%%%%
\section{Introduction}
\label{sect intro}
%%%%%%%%%%%%%%%%%%%%%%
%%%%%%%%%%%%%%%%%%%%%%

In the recent paper \cite{PS4dim}, we introduced a large family of integrable 4-dimensional maps, along with their plenty of remarkable properties. 
We mentioned there that the construction could be probably generalized to higher dimensions. Here, such a higher dimensional generalization is achieved. We give here a construction of a big family of completely integrable Hamiltonian systems in arbitrary even dimension $2m$  and demonstrate that the so called Kahan-Hirota-Kimura discretization of these systems preserves the complete integrability and possesses a whole bunch of remarkable features. 

The set of parameters of the construction is encoded in a $2m \times 2m$ skew-Hamiltonian matrix $A$, that is, a matrix satisfying  
\begin{equation} \label{skew}
A^{\rm T}J=JA, \quad J=\begin{pmatrix} 0 & I \\ -I & 0 \end{pmatrix}.
\end{equation}
Such matrices form a $m(2m-1)$-dimensional vector space and have remarkable spectral properties \cite{skewHam}. To each non-degenerate skew-Hamiltonian matrix $A$, there corresponds a vector space of cubic polynomials $H_0(x)$ on $\mathbb R^{2m}$, satisfying a certain system of second order linear PDEs, encoded in the matrix equation 
\beq \label{harm}
A(\nabla^2 H)=(\nabla^2 H) A^{\rm T},
\end{equation}
where $\nabla^2 H$ is the Hesse matrix of the function $H$. To each such polynomial  $H_0(x)$, there corresponds a unique $(m-1)$-tuple of cubic polynomials $H_i(x)$ satisfying the same matrix differential equations \eqref{harm}, and characterized by 
\begin{equation}\label{CR}
\nabla H_i(x)=A\nabla H_{i-1}(x), \quad i=1,\ldots, m-1.
\end{equation}
For a generic $A$, the functions $H_i(x)$, $i=0,\ldots, m-1$, are functionally independent and are in involution with respect to the standard symplectic structure on $\mathbb R^{2m}$, so that the flows of the Hamiltonian vector fields $J\nabla H_i(x)$ commute. Thus, each $H_0(x)$ defines a completely integrable Hamiltonian system. 

When applied to a completely integrable Hamiltonian system $\dot{x}=J\nabla H_0(x)$ of this family, the Kahan-Hirota-Kimura discretization method produces the map $\Phi_{J\nabla H_0}$ with the following striking properties.
\begin{itemize}
\item The map $\Phi_{J\nabla H_0}$ is symplectic with respect to a symplectic structure which is a perturbation of the canonical symplectic structure on $\mathbb R^{2m}$, and possesses $m$ functionally independent integrals. In other words, $\Phi_{J\nabla H_0}$ is completely integrable. 
\item One can find integrals $\t H_i(x,\epsilon)$ $(i=0,\ldots, m-1)$ of $\Phi_{J\nabla H_0}$ which are rational perturbations of the original polynomials $H_i(x)$, are related by the same equation as \eqref{CR}, that is, $\nabla \t H_i(x,\epsilon)=A \nabla\t H_{i-1}(x,\epsilon)$, and satisfy the same second order differential equations \eqref{harm} as $H_i(x)$ do.
\item In general, the maps $\Phi_{J\nabla H_i}$ do not commute. However, one can find commuting maps as follows. We say that a linear combination of the vector fields,
$$
\sum_{i=0}^{m-1} \alpha_i J\nabla H_i=J\left(\sum_{i=0}^{m-1} \alpha_iA^i\right)\nabla H_0=JB\nabla H_0=B^{\rm T}J\nabla H_0,
$$
is {\em associated} to the vector field $J\nabla H_0$, if the skew-Hamiltonian matrix $B=\sum_{i=0}^{m-1} \alpha_iA^i$  satisfies $B^2=I$. Equivalently, the polynomial $B(\lambda)=\sum_{i=0}^{m-1} \alpha_i\lambda^i$ sends each of the $m$ distinct eigenvalues of $A$ to $\pm 1$. This defines an equivalence relation on the set of vector fields $J\nabla H$ with $H$ satisfying \eqref{harm}. It turns out that Kahan-Hirota-Kimura discretizations of associated vector fields commute and share the invariant symplectic structure and $m$ functionally independent integrals. 
\item The equivalence class of $J\nabla H_0$ consists of $2^{m-1}$ vector fields, if considered projectively (up to sign). If $m>2$, then $2^{m-1}>m$. Thus, the vector field $J\nabla H_0$ can be included (in several ways) into a system of $m$ linearly independent associated vector fields. The corresponding $m$ Kahan-Hirota-Kimura maps are independent, commute, and all share the invariant symplectic structure and $m$ functionally independent integrals.
\end{itemize}

These findings are generalizations of the corresponding results for $m=2$ given in \cite{PS4dim}. To keep notation as simple and transparent as possible, we give in the present paper a detailed account of the case $m=3$ only. The general case can be treated along the same lines without difficulties.

We give a quick review of the Kahan-Hirota-Kimura discretization method in Sect. \ref{sect HK}.
Then we discuss details of the general construction of completely integrable Hamiltonian systems generated by a skew-Hamiltonian matrix $A$ in Sect. \ref{sect int fam}. The rich algebraic properties of the corresponding vector fields are collected in Sect. \ref{sect vector fields}. Associated vector fields are introduced in Sect. \ref{sect assoc}. We prove the main results in Sect. \ref{sect commut} (commutativity), Sect. \ref{sect integrals} (integrals of motion), Sect. \ref{sect symplectic} (invariant symplectic structure) and Sect. \ref{sect diff eqs} (differential equations for the integrals of the maps).

%%%%%%%%%%%%%%%%%%%%%%
%%%%%%%%%%%%%%%%%%%%%%
\section{General properties of the Kahan-Hirota-Kimura discretization}
\label{sect HK}
%%%%%%%%%%%%%%%%%%%%%%
%%%%%%%%%%%%%%%%%%%%%%

Here we recall the main facts about the Kahan-Hirota-Kimura discretization.

This method was introduced in the geometric integration literature by Kahan in the unpublished notes \cite{K} as a method applicable to any system of ordinary differential equations for $x:\bbR\to\bbR^n$ with a quadratic vector field:
\begin{equation}\label{eq: diff eq gen}
\dot{x}=f(x)=Q(x)+Bx+c,
\end{equation}
where each component of $Q:\bbR^n\to\bbR^n$ is a quadratic form, while $B\in{\rm Mat}_{n\times n}(\bbR)$ and $c\in\bbR^n$. Kahan's discretization (with stepsize $2\epsilon$) reads as
\begin{equation}\label{eq: Kahan gen}
\frac{\widetilde{x}-x}{2\epsilon}=Q(x,\widetilde{x})+\frac{1}{2}B(x+\widetilde{x})+c,
\end{equation}
where
\[
Q(x,\widetilde{x})=\frac{1}{2}\big(Q(x+\widetilde{x})-Q(x)-Q(\widetilde{x})\big)
\]
is the symmetric bilinear form corresponding to the quadratic form $Q$. Equation (\ref{eq: Kahan gen}) is {\em linear} with respect to $\widetilde x$ and therefore defines a {\em rational} map $\widetilde{x}=\Phi_f(x,\epsilon)$. Clearly, this map approximates the time $2\epsilon$ shift along the solutions of the original differential system. Since equation (\ref{eq: Kahan gen}) remains invariant under the interchange $x\leftrightarrow\widetilde{x}$ with the simultaneous sign inversion $\epsilon\mapsto-\epsilon$, one has the {\em reversibility} property
\begin{equation}\label{eq: reversible}
\Phi_f^{-1}(x,\epsilon)=\Phi_f(x,-\epsilon).
\end{equation}
In particular, the map $f$ is {\em birational}. The explicit form of the map $\Phi_f$ defined by \eqref{eq: Kahan gen} is 
\beq \label{eq: Phi gen}
\t x =\Phi_f(x,\ep)= x + 2\ep \left( I - \ep f'(x) \right)^{-1} f(x),
\eeq
where $f'(x)$ denotes the Jacobi matrix of $f(x)$. Moreover, if the vector field $f(x)$ is homogeneous (of degree 2), then \eqref{eq: Phi gen} can be equivalently rewritten as
\beq \label{eq: Phi hom}
\t x =\Phi_f(x,\ep)= \left( I - \epsilon f'(x) \right)^{-1} x.
\eeq
Due to \eqref{eq: reversible}, in the latter case we also have:
\beq \label{eq: Phi hom alt}
x =\Phi_f(\t x,-\ep)= \left( I + \epsilon f'(\t x) \right)^{-1} \t x \quad \Leftrightarrow \quad \t x = \left( I + \epsilon f'(\t x) \right) x.
\eeq
One has the following expression for the Jacobi matrix of the map $\Phi_f$:
\beq \label{Jac}
d\Phi_f(x)=\frac{\partial\t x}{\partial x}=\big(I-\epsilon f'(x)\big)^{-1}\big(I+\epsilon f'(\t x)\big).
\eeq

Kahan applied this discretization scheme to the famous Lotka-Volterra system and showed that in this case it possesses a very remarkable non-spiralling property. This property was explained by Sanz-Serna \cite{SS} by demonstrating that in this case the numerical method preserves an invariant Poisson structure of the original system.

The next intriguing appearance of this discretization was in two papers by Hirota and Kimura who (being apparently unaware of the work by Kahan) applied it to two famous {\em integrable} system of classical mechanics, the Euler top and the Lagrange top \cite{HK, KH}. Surprisingly, the discretization scheme produced in both cases {\em integrable} maps. 

In \cite{PS, PPS1, PPS2} the authors undertook an extensive study of the properties of the Kahan's method when applied to integrable systems (we proposed to use in the integrable context the term ``Hirota-Kimura method''). It was demonstrated that, in an amazing number of cases, the method preserves integrability in the sense that the map $\Phi_f(x,\epsilon)$ possesses as many independent integrals of motion as the original system $\dot x=f(x)$.

Further remarkable geometric properties of the Kahan's method were discovered by Celledoni, McLachlan, Owren and Quispel in \cite{CMOQ1}, see also \cite{CMOQ2, CMOQ3}. These properties are unrelated to integrability. They demonstrated that for an arbitrary Hamiltonian vector field $f(x)=J\nabla H(x)$ with a constant Poisson tensor $J$ and a cubic Hamilton function $H(x)$, the map $\Phi_f(x,\epsilon)$ possesses a rational integral of motion 
\beq\label{eq: CMOQ integral}
\t H(x,\epsilon) = \frac{1}{6\epsilon}x^{\rm T}J^{-1}\t x=H(x)+\frac{2\ep}{3}  (\nabla H(x))^{\rm T} \left( I - \ep f'(x) \right)^{-1} f(x),
\eeq
as well as an invariant measure
\beq\label{eq: CMOQ measure}
\frac{dx_1\wedge\ldots\wedge dx_n}{\displaystyle{\det\left( I - \ep f'(x) \right)}}.
\eeq
It should be mentioned that, while for $n=2$ the existence of an invariant measure is equivalent to symplecticity of the map $\Phi_f(x,\epsilon)$, the latter property was not established (before our paper \cite{PS4dim}) for any quadratic Hamiltonian system  in dimension $n\ge 4$.

%%%%%%%%%%%%%%%%%%%%%%
%%%%%%%%%%%%%%%%%%%%%%
\section{A family of integrable 6-dimensional Hamiltonian systems}
\label{sect int fam}
%%%%%%%%%%%%%%%%%%%%%%
%%%%%%%%%%%%%%%%%%%%%%

% In this Section we propose a simple way to construct a large family of integrable 4-dimensional Hamiltonian systems.

Consider the canonical phase space $\bbR^{6}$ with coordinates $(x_1,\ldots,x_{6})$, equipped with the standard symplectic structure in which the Poisson brackets of the coordinate functions are $\{x_1,x_3\}= \{x_2,x_4\}= \{x_3,x_6\}=1$ (all other brackets being either obtained from these ones by skew-symmetry or otherwise vanish). Let $H_0(x)$  be a Hamilton function on  $\bbR^{6}$. The corresponding Hamiltonian system is governed by the equations of motion
\beq  \label{Ham}
\dot x= J \nabla H_0(x),\quad J=\begin{pmatrix} 0 & I \\ -I & 0 \end{pmatrix}.
\eeq
\begin{proposition}
Consider a constant non-degenerate $6\times 6$ matrix $A$, and suppose that for functions $H_1(x)$, $H_2(x)$ the following relations are satisfied:
\begin{eqnarray}
\nabla H_1 & = & A \nabla H_0,  \label{grad K}\\
\nabla H_2 & = & A \nabla H_1.  \label{grad L}
\end{eqnarray}
If the matrix $A$ is skew-Hamiltonian, that is, if it satisfies
\beq \label{cond A}
 A^{\rm{T}} J = J A,
\eeq
then the functions $H_0$, $H_1$, $H_2$ are pairwise in involution, so that the Hamiltonian system \eqref{Ham} is completely integrable.
\end{proposition}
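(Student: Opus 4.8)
The plan is to reduce everything to a single linear-algebra identity about skew-Hamiltonian matrices. The standard symplectic structure on $\bbR^{6}$ appearing in \eqref{Ham} has Poisson bracket $\{F,G\}=(\nabla F)^{\rm T}J\nabla G$. Hence, writing $g=\nabla H_0$, the hypotheses \eqref{grad K}--\eqref{grad L} say exactly that $\nabla H_1=Ag$ and $\nabla H_2=A^{2}g$, i.e. $\nabla H_i=A^{i}g$ for $i=0,1,2$, and what must be shown is that $(A^{i}g)^{\rm T}J(A^{j}g)=0$ for all $i,j\in\{0,1,2\}$.

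First I would record the elementary consequence of \eqref{cond A} that $JA^{k}=(A^{k})^{\rm T}J$ for every $k\ge 0$; this follows by an immediate induction from $JA=A^{\rm T}J$, using that $A$ commutes with its powers. Taking transposes and using $J^{\rm T}=-J$ gives $(JA^{k})^{\rm T}=-(A^{k})^{\rm T}J=-JA^{k}$, so $JA^{k}$ is skew-symmetric for all $k\ge 0$.

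With this in hand the computation is direct: since the powers of $A$ commute,
\[
\{H_i,H_j\}=(A^{i}g)^{\rm T}J(A^{j}g)=g^{\rm T}(A^{i})^{\rm T}JA^{j}g=g^{\rm T}(A^{i})^{\rm T}(A^{j})^{\rm T}Jg=g^{\rm T}(A^{i+j})^{\rm T}Jg=g^{\rm T}JA^{i+j}g,
\]
where I have used $JA^{j}=(A^{j})^{\rm T}J$ and $(A^{i})^{\rm T}(A^{j})^{\rm T}=(A^{i+j})^{\rm T}$. Because $JA^{i+j}$ is skew-symmetric, the quadratic form $g^{\rm T}JA^{i+j}g$ vanishes identically, so $\{H_i,H_j\}=0$ for all $i,j$. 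Thus $H_0,H_1,H_2$ are pairwise in involution, and since for generic $A$ they are functionally independent — three functions in involution on the $6$-dimensional symplectic space — the Hamiltonian system \eqref{Ham} is completely integrable in the Liouville--Arnold sense; equivalently, the flows of the vector fields $J\nabla H_i$ commute.

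I do not expect a genuine obstacle here: the only points requiring a little care are the bookkeeping with transposes of products and the fact that $(A^{i})^{\rm T}(A^{j})^{\rm T}=(A^{i+j})^{\rm T}$, which uses commutativity of the powers of $A$, together with the observation that the relevant exponents $i+j$ range only over $0,\dots,4$, so that no information beyond \eqref{grad K}--\eqref{grad L} (which already encode $\nabla H_i=A^{i}\nabla H_0$) is needed.
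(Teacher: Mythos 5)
Your proof is correct and follows essentially the same route as the paper: the paper also reduces each bracket to a quadratic form $v^{\rm T}JA^{k}v$ and uses that $JA$ (and, for $\{H_0,H_2\}$, $JA^{2}$ via the observation that $A^{2}$ is again skew-Hamiltonian) is skew-symmetric, which is exactly the $k=1,2$ instance of your uniform statement that $JA^{k}$ is skew-symmetric for all $k$. Your handling of functional independence (asserted generically) matches the paper's brief remark, so nothing is missing.
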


\benpf We have:
\begin{eqnarray*}
\{H_0,H_1\} & = & (\nabla H_0)^{\rm{T}} J \nabla H_1 = (\nabla H_0)^{\rm{T}} J A \nabla H_0, \\
\{H_1,H_2\} & = & (\nabla H_1)^{\rm{T}} J \nabla H_2=(\nabla H_1)^{\rm{T}} J A \nabla H_1,
\end{eqnarray*}
and both expressions vanish if the matrix $JA$ is skew-symmetric, which gives condition \eqref{cond A}. Since $\nabla H_2=A^2\nabla H_0$, and the matrix $A^2$ satisfies the same condition \eqref{cond A}:
$$
(A^{\rm T})^2J=A^{\rm T}JA=JA^2,
$$
we see that also $\{H_0,H_2\}=0$. Equations \eqref{grad K}, \eqref{grad L} with a non-scalar matrix $A$ also ensure that $\nabla H_0$, $\nabla H_1$, $\nabla H_2$ are linearly independent (generically).
\endpf

If $A$ is written in the block form as
$$
A = \begin{pmatrix}
A_1 & A_2 \\
A_3 & A_4
\end{pmatrix}
$$
with $3\times 3$ blocks $A_i$, then the skew-Hamiltonian condition \eqref{cond A} reads:
\beq \label{cond2}
A_1^{\rm{T}}=A_4, \qquad A_2^{\rm{T}}=-A_2, \qquad A_3^{\rm{T}}=-A_3.
\eeq
Such matrices form a 15-dimensional vector space:
\beq \label{A elem}
A=\begin{pmatrix}
a_1 & a_2 & a_3 & 0 & -a_{10} & -a_{11}  \\
a_4 & a_5 & a_6 & a_{10} & 0 & -a_{12}  \\
a_7 & a_8 & a_9 & a_{11} & a_{12} & 0  \\
0 & a_{13} & a_{14} & a_1 & a_4 & a_7 \\
-a_{13} & 0 & a_{15} & a_2 & a_5 & a_8 \\
-a_{14} & -a_{15} & 0 & a_3 & a_6 & a_9
\end{pmatrix}.
\eeq

We now discuss applicability of this construction. For a given function $H_0$, differential equations \eqref{grad K} for $H_1$ are solvable if and only if $H_0$ satisfies the following condition:
\beq  \label{d2H}
A(\nabla^2 H)=(\nabla^2 H) A^{\rm T},
\eeq
where $\nabla^2 H$ is the Hesse matrix of the function $H$. Note that solutions $H_1(x)$ of \eqref{grad K} satisfy the same compatibility conditions \eqref{d2H}. To see this, observe that one has $\nabla H_0=A^{-1} \nabla H_1$, and the solvability conditions for this equation in terms of $H_1$ is the same as \eqref{d2H} but with $A$ replaced by $A^{-1}$:
$$
A^{-1}(\nabla^2 H_1)=(\nabla^2 H_1) (A^{-1})^{\rm T},
$$
which is clearly equivalent to 
$$
A(\nabla^2 H_1)=(\nabla^2 H_1) A^{\rm T}.
$$
The latter equation ensures solvability of \eqref{grad L} for $H_2$, and also the fact that solutions $H_2$ satisfy the same system of second order differential equations \eqref{d2H}. Thus, to any solution $H_0$ of the system  \eqref{d2H} there correspond, via \eqref{grad K}, \eqref{grad L}, two other solutions $H_1$, $H_2$ (unique up to additive constants). 

\begin{proposition}
The linear space of homogeneous polynomials on $x_1,\ldots,x_6$ of degree 3, satisfying the system of second order PDEs \eqref{d2H}, has dimension 12.
\end{proposition}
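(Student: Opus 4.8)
The plan is to bring $A$ into a diagonal normal form by a linear change of the independent variables, and then to obtain the dimension by a straightforward monomial count. The first ingredient is that the system \eqref{d2H} is covariant under linear substitutions: if $R\in\mathrm{GL}_6(\bbR)$ and $\widetilde H(y):=H(Ry)$, then $\nabla^2_y\widetilde H=R^{\rm T}(\nabla^2 H)R$, hence $\nabla^2 H=R^{-\rm T}(\nabla^2_y\widetilde H)R^{-1}$, and substituting this into \eqref{d2H} and multiplying by $R^{\rm T}$ on the left and by $R$ on the right gives, with $\widehat A:=R^{\rm T}AR^{-\rm T}$,
\[
A(\nabla^2 H)=(\nabla^2 H)A^{\rm T}\ \Longleftrightarrow\ \widehat A(\nabla^2\widetilde H)=(\nabla^2\widetilde H)\widehat A^{\rm T}.
\]
As $R$ varies over $\mathrm{GL}_6(\bbR)$, the matrix $\widehat A$ runs over the whole similarity class of $A$, while $H\mapsto\widetilde H$ is a linear isomorphism of the space of homogeneous cubics onto itself; hence the dimension in question depends only on the conjugacy class of $A$.

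Next I would use the spectral structure of skew-Hamiltonian matrices \cite{skewHam}: a $6\times6$ skew-Hamiltonian matrix is symplectically similar to a block-diagonal matrix $\mathrm{diag}(T,T^{\rm T})$ with a $3\times3$ block $T$, so its characteristic polynomial is $(\det(\lambda I-T))^2$, and for a generic $A$ it has three distinct non-zero eigenvalues $\lambda_1,\lambda_2,\lambda_3$, each of multiplicity $2$, and is diagonalizable. Such an $A$ is therefore similar to $D:=\mathrm{diag}(\Lambda,\Lambda)$ with $\Lambda=\mathrm{diag}(\lambda_1,\lambda_2,\lambda_3)$, and by the covariance above it is enough to compute the dimension of the space of homogeneous cubics solving \eqref{d2H} for $A=D$.

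For $A=D$ this is immediate. Since $D$ is symmetric and diagonal, \eqref{d2H} reads $D\,\nabla^2 H=\nabla^2 H\,D$, i.e. $\nabla^2 H$ commutes with $D$; as $D_{ii}=D_{jj}$ exactly when $i\equiv j\pmod 3$, this amounts to $\partial_i\partial_j H=0$ whenever $i\not\equiv j\pmod 3$. For a homogeneous cubic this says precisely that no monomial of $H$ involves variables from two distinct pairs among $\{x_1,x_4\}$, $\{x_2,x_5\}$, $\{x_3,x_6\}$, i.e. $H=h_1(x_1,x_4)+h_2(x_2,x_5)+h_3(x_3,x_6)$ with each $h_k$ an arbitrary homogeneous cubic in two variables; conversely every such $H$ solves \eqref{d2H}. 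Since the homogeneous cubics in two variables form a $4$-dimensional space, the dimension equals $3\cdot 4=12$.

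The one step requiring genuine care is the normal-form reduction, namely the fact that a generic skew-Hamiltonian matrix is diagonalizable with the indicated doubled spectrum; this is exactly the input drawn from the spectral theory of skew-Hamiltonian matrices, whereas the covariance argument and the concluding count are elementary. One may add that, by upper semicontinuity of the nullity in a linear family, $12$ is in fact a lower bound for the dimension for every non-degenerate skew-Hamiltonian matrix $A$.
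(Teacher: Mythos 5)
Your argument is essentially correct, but it takes a genuinely different route from the paper. The paper's proof is a direct linear-algebra computation: it writes out the 15 scalar PDEs contained in \eqref{d2H}, notes that 12 of them are independent, expands each in the variables to obtain 72 linear equations on the 56 coefficients of a homogeneous cubic, and reports (by a symbolic computation) that the rank of this system is 44, whence $56-44=12$. You instead exploit the covariance of \eqref{d2H} under linear substitutions (with $A$ transforming by similarity, $\widehat A=R^{\rm T}AR^{-\rm T}$ — this computation is correct), reduce a generic skew-Hamiltonian $A$ to the doubled diagonal form $\mathrm{diag}(\lambda_1,\lambda_2,\lambda_3,\lambda_1,\lambda_2,\lambda_3)$, and read off the answer from the requirement that $\nabla^2H$ commute with this matrix: the solutions are exactly $h_1(x_1,x_4)+h_2(x_2,x_5)+h_3(x_3,x_6)$, giving $3\cdot 4=12$. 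This buys quite a lot: it explains where 12 comes from, needs no machine computation, makes the genericity hypothesis explicit (which the paper leaves implicit — for $A=I$ the dimension is 56), and generalizes verbatim to dimension $2m$, where it gives $4m$. Your closing semicontinuity remark (dimension $\geq 12$ for all $A$ in the family) is also correct, since the locus where the rank of the linear system drops below its maximum is Zariski closed.

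One step does need repair: you work with $R\in\mathrm{GL}_6(\bbR)$, but a generic real skew-Hamiltonian matrix may have one real and two complex-conjugate double eigenvalues, in which case it is not diagonalizable over $\bbR$, so your normal-form reduction as written only covers the open (but not dense) set of $A$ with real distinct spectrum. The fix is one line: either complexify — the defining linear system on the 56 coefficients has real entries, so the real dimension of its real kernel equals the complex dimension of its complex kernel, and over $\mathbb{C}$ your reduction applies to every $A$ with three distinct double eigenvalues — or note that the locus where the rank is at least 44 is Zariski open and nonempty (it contains the real-spectrum locus you treated), hence dense, which together with your lower bound gives dimension exactly 12 for generic $A$. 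With that amendment the proof is complete.
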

\begin{proof}
In general, condition $A (\nabla^2 H) = (\nabla^2 H ) A^T$ gives 15 scalar PDEs $C_i=0$. 
Only 12 of them are linearly independent.
A general homogeneous polynomial of degree 3 in six variables $x_1, \ldots,x_6$ has 56 coefficients. 
Each of the expressions $C_i$ , $i = 1, \ldots, 12$, is a linear polynomial in these six variables, so that each equation $C_i = 0$ results in 6 linear equations for the coefficients of $H$. Altogether we get $6\cdot12=72$ linear homogeneous equations for 56 coefficients of $H$. A careful inspection of the resulting linear system reveals that the rank of its matrix is equal to 44. Therefore, the dimension of the space of solutions is equal to $56 - 44 = 12$. 

\end{proof}

%%%%%%%%%%%%%%%%%%%%%%%%%
%%%%%%%%%%%%%%%%%%%%%%%%%
\section{General algebraic properties of the vector fields $f_0$, $f_1$, $f_2$}
\label{sect vector fields}
%%%%%%%%%%%%%%%%%%%%%%%%%
%%%%%%%%%%%%%%%%%%%%%%%%%

From now on we will assume that $A$ is a skew-Hamiltonian matrix, that is,  satisfies \eqref{cond A}. 
Assume that $H_0(x)$, $H_1(x)$ and $H_2(x)$ are homogeneous polynomials of degree 3 satisfying \eqref{grad K}, \eqref{grad L}. Set 
\begin{eqnarray}
f_0(x) & = & J\nabla H_0(x),  \label{f}\\
f_1(x) & =& J\nabla H_1(x)=JA\nabla H_0(x),  \label{g} \\
f_2(x) & =& J\nabla H_2(x)=JA^2\nabla H_0(x).  \label{h}
\end{eqnarray}
Due to \eqref{cond A}, we have:
\begin{eqnarray} 
f_1(x) & = & A^{\rm T} f_0(x), \label{f vs g} \\
f_2(x) & = & A^{\rm T} f_1(x) = (A^{\rm T})^2 f_0(x). \label{f vs h} 
\end{eqnarray}
Furthermore, we have:
\begin{eqnarray} 
f_1'(x) & = & A^{\rm T} f_0'(x), \label{f' vs g'} \\
% \quad \Leftrightarrow \quad f(x)=\alpha^{-1}A^{\rm T}g(x).
f_2'(x) & = & A^{\rm T} f_1'(x) = (A^{\rm T})^2 f_0'(x). \label{f' vs h'} 
\end{eqnarray}
\begin{lemma}
The following identities hold true:
\beq \label{Ham f'}
(f_i'(x))^{\rm T}J=-Jf_i'(x) \quad (i=0,1,2),
\eeq
\beq \label{Af'}
 A^{\rm T}f_i'(x)=f_i'(x)A^{\rm T} \quad (i=0,1,2),
\eeq
\beq \label{vf comm}
 f_i'(x)f_j(x)=f_j'(x)f_i(x) \quad (i,j=0,1,2),
\eeq
\beq \label{f'g'=g'f'}
 f_i'(x)f_j'(x)=f_j'(x)f_i'(x) \quad (i,j=0,1,2).
\eeq
\end{lemma}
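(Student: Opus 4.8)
The plan is to derive all four identities from two elementary facts about the Jacobi matrices $f_i'(x)=J\nabla^2H_i(x)$, where $\nabla^2H_i$ denotes the (symmetric) Hesse matrix of $H_i$. I would first dispose of \eqref{Ham f'}, which merely says that the Jacobi matrix of a Hamiltonian vector field lies in the symplectic Lie algebra: since $\nabla^2H_i$ is symmetric, $J^{\rm T}=-J$ and $J^2=-I$, both sides collapse to $\nabla^2H_i$, as $(f_i'(x))^{\rm T}J=(\nabla^2H_i)^{\rm T}J^{\rm T}J=-(\nabla^2H_i)J^2=\nabla^2H_i$ and $-Jf_i'(x)=-J^2\nabla^2H_i=\nabla^2H_i$. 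No input beyond the definitions of $f_i$ and $J$ is needed.

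The substantive step is \eqref{Af'}. Here I would combine the skew-Hamiltonian condition \eqref{cond A} with the second-order PDE \eqref{d2H}, recalling from Section \ref{sect int fam} that not only $H_0$ but also $H_1$ and $H_2$ satisfy $A(\nabla^2H_i)=(\nabla^2H_i)A^{\rm T}$. Then $A^{\rm T}f_i'(x)=A^{\rm T}J\nabla^2H_i=JA\nabla^2H_i=J(\nabla^2H_i)A^{\rm T}=f_i'(x)A^{\rm T}$, the two middle equalities invoking \eqref{cond A} and \eqref{d2H} respectively. (Equivalently, one establishes this for $i=0$ and then propagates it to $i=1,2$ using $f_1'=A^{\rm T}f_0'$ and $f_2'=(A^{\rm T})^2f_0'$ from \eqref{f' vs g'}, \eqref{f' vs h'}.)

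Once \eqref{Af'} is in hand, the remaining two identities are purely formal. Using \eqref{f vs g}, \eqref{f vs h}, \eqref{f' vs g'}, \eqref{f' vs h'} I would write $f_j(x)=(A^{\rm T})^jf_0(x)$ and $f_i'(x)=(A^{\rm T})^if_0'(x)$ for $i,j\in\{0,1,2\}$; by \eqref{Af'} with $i=0$, the matrix $(A^{\rm T})^k$ commutes with $f_0'(x)$ for every $k\ge0$, so $f_i'(x)f_j(x)=(A^{\rm T})^{i+j}f_0'(x)f_0(x)$ and $f_i'(x)f_j'(x)=(A^{\rm T})^{i+j}f_0'(x)f_0'(x)$, both symmetric in $i$ and $j$, which gives \eqref{vf comm} and \eqref{f'g'=g'f'}. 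Identity \eqref{vf comm} in particular expresses the commutativity of the flows of $f_0,f_1,f_2$, consistent with the involutivity $\{H_i,H_j\}=0$, though the computation above is the most economical route. I do not anticipate a genuine obstacle: the only point requiring attention is that \eqref{d2H} must be used for all three $H_i$ and not just for $H_0$, and this closure property was precisely what was checked in Section \ref{sect int fam}; everything else is bookkeeping with the matrices $J$ and $A^{\rm T}$.
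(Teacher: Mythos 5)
Your proof is correct and follows essentially the same route as the paper: the key identity \eqref{Af'} is obtained by exactly the same computation combining \eqref{cond A} with the fact that all three $H_i$ satisfy \eqref{d2H}, and \eqref{vf comm}, \eqref{f'g'=g'f'} then follow by commuting powers of $A^{\rm T}$ past $f_0'$, just as in the paper's verification. The only cosmetic differences are that you write out the standard check of \eqref{Ham f'} (which the paper merely cites as the characteristic property of Hamiltonian Jacobi matrices) and that you treat \eqref{vf comm}, \eqref{f'g'=g'f'} uniformly via $(A^{\rm T})^{i+j}$ rather than relation by relation.
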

\begin{proof}
Equation \eqref{Ham f'} is the characteristic property of Jacobi matrices of Hamiltonian vector fields.
Equation \eqref{Af'} is equivalent to \eqref{d2H}, due to \eqref{cond A}:
\[
 A^{\rm T}f_i'(x)=A^{\rm T}J\nabla^2 H_i(x)=JA\nabla^2 H_i(x)=J\nabla^2 H_i(x) A^{\rm T}=f_i'(x)A^{\rm T}.
\]

To prove \eqref{vf comm}, we compute with the help of \eqref{f vs g}, \eqref{Af'}, \eqref{f' vs g'}:
\[
f_0'(x)f_1(x)=f_0'(x)A^{\rm T}f_0(x)=A^{\rm T}f_0'(x)f_0(x)=f'_1(x)f_0(x),
\]
and similarly for  other two relations. Observe that \eqref{vf comm} expresses the pairwise commutativity of the vector fields $f_i(x)$. 

Identities \eqref{f'g'=g'f'} are proved along the same lines, with the help of \eqref{f' vs g'}, \eqref{Af'}:
\[
f_0'(x)f'_1(x)=f'_0(x)A^{\rm T}f'_0(x)=A^{\rm T}f'_0(x)f'_0(x)=f_1'(x)f'_0(x),
\]
and similarly for  other two relations.
\end{proof}

%%%%%%%%%%%%%%%%%%%%%%%%%%%%%%%%%%%%
\section{Associated vector fields}
\label{sect assoc}
%%%%%%%%%%%%%%%%%%%%%%%%%%%%%%%%%%%%%%

\begin{definition} \label{def assoc}
Let the skew-Hamiltonian matrix 
$$
B=\alpha I+\beta A+\gamma A^2
$$
satisfy 
$$
B^2=I.
$$
Then the vector field
$$
g(x) =JB \nabla H_0(x)=B^{\rm T}J\nabla H_0(x)=B^{\rm T} f_0(x)
$$
is called {\em associated} to the vector field $f_0(x)$. The vector field $g(x)$ is Hamiltonian,
$$
g(x) = J\nabla K(x),
$$
with the Hamilton function
$$
K(x) =  \alpha H_0(x)+\beta H_1(x)+\gamma H_2(x).
$$
\end{definition}
This defines an equivalence relation on the set of vector fields $J\nabla H(x)$ with the Hamilton functions $H(x)$ satisfying \eqref{d2H}. 

\begin{lemma} \label{lemma g'g'}
If vector field $g(x)$ is associated to $f_0(x)$ via the matrix $B$, then the following identities hold true:
\beq \label{f'f=g'g}
 g'(x)g(x)=f_0'(x)f_0(x),
\eeq
\beq \label{f'f'=g'g'}
(g'(x))^2=(f_0'(x))^2.
\eeq
\end{lemma}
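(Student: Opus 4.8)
The plan is to exploit the fact that $B=\alpha I+\beta A+\gamma A^2$ is a polynomial in $A$ with constant scalar coefficients, so its transpose $B^{\rm T}=\alpha I+\beta A^{\rm T}+\gamma (A^{\rm T})^2$ is the same polynomial in $A^{\rm T}$. Since $B$ is constant, differentiating $g(x)=B^{\rm T}f_0(x)$ gives $g'(x)=B^{\rm T}f_0'(x)$. Now identity \eqref{Af'} (for $i=0$) says $A^{\rm T}f_0'(x)=f_0'(x)A^{\rm T}$; iterating this gives $(A^{\rm T})^2 f_0'(x)=f_0'(x)(A^{\rm T})^2$, and then by linearity $B^{\rm T}f_0'(x)=f_0'(x)B^{\rm T}$. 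Hence also $g'(x)=f_0'(x)B^{\rm T}$.

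With this commutation in hand, the two identities follow by a direct computation. For \eqref{f'f=g'g}:
$$
g'(x)g(x)=B^{\rm T}f_0'(x)\cdot B^{\rm T}f_0(x)=f_0'(x)\,B^{\rm T}B^{\rm T}f_0(x)=f_0'(x)(B^{\rm T})^2f_0(x),
$$
and since $(B^{\rm T})^2=(B^2)^{\rm T}=I$, this equals $f_0'(x)f_0(x)$. The same computation with the vector $f_0(x)$ replaced by the matrix $f_0'(x)$ gives
$$
(g'(x))^2=B^{\rm T}f_0'(x)\cdot B^{\rm T}f_0'(x)=f_0'(x)(B^{\rm T})^2f_0'(x)=(f_0'(x))^2,
$$
which is \eqref{f'f'=g'g'}.

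I do not expect any real obstacle here. The only points that require (minimal) care are that the commutation relation \eqref{Af'}, stated for $A^{\rm T}$, propagates to $B^{\rm T}$ — immediate because $B$ is a polynomial in $A$ — and that the involution condition $B^2=I$ transposes to $(B^{\rm T})^2=I$. Both are purely formal, so the whole argument is short.
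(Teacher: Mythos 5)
Your proof is correct and follows essentially the same route as the paper: both rest on $g(x)=B^{\rm T}f_0(x)$, $g'(x)=B^{\rm T}f_0'(x)$, the commutation of $B^{\rm T}$ with $f_0'(x)$ inherited from \eqref{Af'} (since $B$ is a polynomial in $A$), and the involution $(B^{\rm T})^2=I$. The only difference is the trivial one of which factor you commute $B^{\rm T}$ past, so nothing further is needed.
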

\begin{proof}
We first check \eqref{f'f=g'g}:
\[
g'(x)g(x)=g'(x)B^{\rm T}f_0(x)=B^{\rm T}g'(x)f_0(x)=(B^{\rm T})^2 f_0'(x)f_0(x)= f_0'(x)f_0(x).
\]
For \eqref{f'f'=g'g'} everything is similar:
\[
g'(x)g'(x)=g'(x)B^{\rm T}f_0'(x)=B^{\rm T}g'(x)f_0'(x)=(B^{\rm T})^2 (f_0'(x))^2= (f_0'(x))^2.
\]
\end{proof}

\begin{cor}\label{th1}
Under conditions of Lemma \ref{lemma g'g'}, we have:
\beq \label{condet}
 \det  \left(I - \ep g'(x) \right)=\det  \left(I - \ep f_0'(x) \right).
\eeq
\end{cor}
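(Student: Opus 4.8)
The plan is to derive \eqref{condet} from the identity $(g'(x))^2=(f_0'(x))^2$ of Lemma \ref{lemma g'g'}, using as the decisive extra ingredient the fact that both $f_0'(x)$ and $g'(x)$ are Jacobi matrices of \emph{Hamiltonian} vector fields. Fix $x$ and abbreviate $M=f_0'(x)$, $N=g'(x)$. First I would record that $M^{\rm T}J=-JM$ by \eqref{Ham f'}, and likewise $N^{\rm T}J=-JN$: indeed $g(x)$ is Hamiltonian by Definition \ref{def assoc}, and \eqref{Ham f'} is just the characteristic property of Jacobi matrices of Hamiltonian vector fields. From $M^{\rm T}J=-JM$ one gets $JMJ^{-1}=-M^{\rm T}$, hence
\[
\det(I+\epsilon M)=\det\!\big(J(I+\epsilon M)J^{-1}\big)=\det(I-\epsilon M^{\rm T})=\det(I-\epsilon M),
\]
and the same for $N$. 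In other words, $\epsilon\mapsto\det(I-\epsilon M)$ and $\epsilon\mapsto\det(I-\epsilon N)$ are \emph{even} polynomials in $\epsilon$ (equivalently, the spectrum of an infinitesimally symplectic matrix on $\bbR^{6}$ is symmetric about the origin).

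Next I would use multiplicativity of the determinant. Since $\det(I-\epsilon M)=\det(I+\epsilon M)$,
\[
\big(\det(I-\epsilon M)\big)^2=\det(I-\epsilon M)\det(I+\epsilon M)=\det\!\big((I-\epsilon M)(I+\epsilon M)\big)=\det(I-\epsilon^2 M^2),
\]
and the identical computation gives $\big(\det(I-\epsilon N)\big)^2=\det(I-\epsilon^2 N^2)$. Now \eqref{f'f'=g'g'} says $N^2=M^2$, so
\[
\big(\det(I-\epsilon g'(x))\big)^2=\big(\det(I-\epsilon f_0'(x))\big)^2.
\]
Finally, to pass from this squared identity to \eqref{condet}, set $P(\epsilon):=\det(I-\epsilon g'(x))$ and $Q(\epsilon):=\det(I-\epsilon f_0'(x))$; these are polynomials in $\epsilon$ with $P(0)=Q(0)=1$. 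From $P^2=Q^2$ in the polynomial ring one gets $(P-Q)(P+Q)=0$, hence $P=Q$ or $P=-Q$, and the second option is ruled out by $P(0)=Q(0)=1\neq-1$. Therefore $P=Q$, which is precisely \eqref{condet}.

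There is no real obstacle here; the only point to watch is that $(g'(x))^2=(f_0'(x))^2$ by itself does \emph{not} force the two determinants to agree — it only pins down the spectra of $f_0'(x)$ and $g'(x)$ up to signs of the eigenvalues — and the Hamiltonian structure is exactly what repairs this, since evenness of the characteristic polynomials turns $\det(I-\epsilon^2 M^2)$ into a perfect square and thereby lets one take the ``square root''. (As an alternative to the last two paragraphs, one could argue directly on characteristic polynomials: evenness gives $\det(\lambda I-M)=p(\lambda^2)$ and $\det(\lambda I-N)=q(\lambda^2)$ with monic $p,q$ of degree $3$, whence $\det(z I-M^2)=p(z)^2$ and $\det(z I-N^2)=q(z)^2$, so $M^2=N^2$ yields $p=q$ and hence $\det(\lambda I-M)=\det(\lambda I-N)$; but the determinant manipulation above is shorter and avoids any discussion of multiplicities.)
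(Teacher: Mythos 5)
Your proof is correct, and it reaches \eqref{condet} by a somewhat different route than the paper. The paper's argument reduces \eqref{condet} to the equality ${\rm tr}\,(f_0'(x))^k={\rm tr}\,(g'(x))^k$ for all $k$: the odd-power traces vanish because both vector fields are Hamiltonian, and the even-power traces coincide by \eqref{f'f'=g'g'}; it leaves implicit the (characteristic-zero) fact that equal power sums of eigenvalues determine the characteristic polynomial, e.g.\ via Newton's identities. You use the same two ingredients --- the Hamiltonian property \eqref{Ham f'} (which indeed applies to $g$ as well, since $g=J\nabla K$) and $(g'(x))^2=(f_0'(x))^2$ --- but package them differently: the similarity $Jf_0'(x)J^{-1}=-(f_0'(x))^{\rm T}$ makes $\det(I-\epsilon f_0'(x))$ an even polynomial in $\epsilon$, multiplicativity of the determinant turns its square into $\det\bigl(I-\epsilon^2(f_0'(x))^2\bigr)$, and the sign of the square root is fixed by the normalization at $\epsilon=0$ in the integral domain $\mathbb{R}[\epsilon]$. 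Your version is self-contained where the paper's is telegraphic, and your closing remark --- that $(g')^2=(f_0')^2$ alone only pins the spectra down up to signs, the Hamiltonian structure being exactly what removes this ambiguity --- correctly identifies the role of each hypothesis; the paper uses the Hamiltonian structure for precisely the same purpose, namely to dispose of the odd part.
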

\begin{proof}
It is enough to prove that ${\rm tr}\, (f_0'(x))^k={\rm tr}\, (g'(x))^k$ for all $k$. Since vector fields $f_0$, $g$ are Hamiltonian, we have:
${\rm tr}\, (f_0'(x))^k=0$ and ${\rm tr}\, (g'(x))^k=0$ for odd $k$. The equalities for even $k$ follow from \eqref{f'f'=g'g'}.
\end{proof}

\begin{lemma}
For a generic $6\times 6$ matrix skew-Hamiltonian $A$, there exist three linearly independent skew-Hamiltonian matrices
\begin{equation}
B_i=\alpha_i I+\beta_i A+\gamma_i A^2\quad (i=1,2,3),
\end{equation}
satisfying
\begin{equation}\label{B^2}
B_i^2=I,
\end{equation}
and such that 
\begin{equation}\label{B char pol}
\det(B_i-\lambda I)=(\lambda-1)^4(\lambda+1)^2.
\end{equation}
They are related by
\begin{equation}\label{B sum}
B_1+B_2+B_3=I,
\end{equation}
which is equivalent to
\begin{equation}
\alpha_1+\alpha_2+\alpha_3=1, \quad \beta_1+\beta_2+\beta_3=0, \quad \gamma_1+\gamma_2+\gamma_3=0.
\end{equation}
\end{lemma}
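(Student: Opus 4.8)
The plan is to reduce everything to the spectral structure of $A$ and to Lagrange interpolation with three nodes. First I would recall (see \cite{skewHam}) that the characteristic polynomial of a skew-Hamiltonian matrix is a perfect square; hence for a generic skew-Hamiltonian $A$ the spectrum consists of three distinct eigenvalues $\mu_1,\mu_2,\mu_3$, each semisimple of algebraic multiplicity $2$. Thus $A$ is diagonalizable, its minimal polynomial $(\lambda-\mu_1)(\lambda-\mu_2)(\lambda-\mu_3)$ has degree $3$, and in particular $I,A,A^2$ are linearly independent. Moreover, for any polynomial $B(\lambda)=\alpha+\beta\lambda+\gamma\lambda^2$ the matrix $B=B(A)=\alpha I+\beta A+\gamma A^2$ is automatically skew-Hamiltonian, since such matrices form a vector space containing $I$, $A$, and (as already noted) $A^2$ whenever it contains $A$.

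Next I would invoke the spectral mapping property: because $A$ is diagonalizable with eigenvalues $\mu_j$ of multiplicity $2$, the matrix $B(A)$ is diagonalizable with eigenvalues $B(\mu_1),B(\mu_2),B(\mu_3)$, each of multiplicity $2$. Hence $B(A)^2=I$ holds if and only if $B(\mu_j)\in\{+1,-1\}$ for $j=1,2,3$, and in that case $\det(B(A)-\lambda I)=\prod_{j=1}^{3}(\lambda-B(\mu_j))^2$. Therefore the requirement \eqref{B char pol} picks out exactly the sign patterns $(\sigma_1,\sigma_2,\sigma_3)=(B(\mu_1),B(\mu_2),B(\mu_3))\in\{\pm1\}^3$ having exactly one entry equal to $-1$; there are precisely three of them. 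For each such pattern, Lagrange interpolation through the nodes $(\mu_j,\sigma_j)$ yields a unique quadratic polynomial $B_i(\lambda)$, and setting $B_i=B_i(A)=\alpha_iI+\beta_iA+\gamma_iA^2$ produces a skew-Hamiltonian matrix satisfying \eqref{B^2} and \eqref{B char pol}.

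For the identity \eqref{B sum} I would argue at the level of polynomials. At each node $\mu_j$ the value of $B_1(\lambda)+B_2(\lambda)+B_3(\lambda)$ equals $(-1)+(+1)+(+1)=1$, since the sign $-1$ is attached to $\mu_j$ for exactly one of the three patterns. A polynomial of degree at most $2$ equal to $1$ at three distinct points is identically $1$, so $B_1(A)+B_2(A)+B_3(A)=I$; by linear independence of $I,A,A^2$ this is equivalent to $\alpha_1+\alpha_2+\alpha_3=1$, $\beta_1+\beta_2+\beta_3=0$, $\gamma_1+\gamma_2+\gamma_3=0$. Finally, linear independence of $B_1,B_2,B_3$ is equivalent, again via the independence of $I,A,A^2$, to linear independence of the polynomials $B_i(\lambda)$, hence — through the evaluation isomorphism from quadratics to $\bbR^3$ given by $B\mapsto(B(\mu_1),B(\mu_2),B(\mu_3))$ — to linear independence of the vectors $(-1,1,1)$, $(1,-1,1)$, $(1,1,-1)$, whose determinant equals $4\neq0$.

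The only delicate point is the first step: extracting the generic spectral structure of a skew-Hamiltonian matrix from \cite{skewHam}, and, since we want the $B_i$ to be real, observing that this forces us onto the open set of skew-Hamiltonian $A$ whose three distinct eigenvalues are real, so that the three interpolating polynomials indeed have real coefficients. Everything after that is elementary linear algebra and interpolation.
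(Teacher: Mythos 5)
Your proof is correct and takes essentially the same route as the paper's: it uses the spectral structure of a generic skew-Hamiltonian matrix from \cite{skewHam} (three pairwise distinct double, semisimple eigenvalues, hence diagonalizability), constructs each $B_i$ by Lagrange interpolation sending one eigenvalue to $-1$ and the other two to $+1$, and obtains \eqref{B sum} from the fact that a quadratic polynomial equal to $1$ at three distinct points is identically $1$. Your explicit verification of the linear independence of $B_1,B_2,B_3$ and the remark that real coefficients require the three eigenvalues to be real are minor additions that the paper leaves implicit.
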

\begin{proof} It was demonstrated in \cite{skewHam} that any skew-Hamiltonian matrix $A$ is related by a symplectic similarity transformation to a matrix of the form
$$
SAS^{-1}=    \begin{pmatrix} W & 0 \\ 0 & W^{\rm T} \end{pmatrix}, \quad S^{\rm T}JS=J.
$$
As a consequence, the characteristic polynomial of a skew-Hamiltonian $6\times 6$ matrix $A$ is of the form 
$$
\det(A-\lambda I)=(\lambda^3-a\lambda^2+b\lambda-c)^2.
$$
Thus, a generic skew-Hamiltonian $6\times 6$ matrix $A$ has three double eigenvalues $\lambda_i$, $i=1,2,3$, which are pairwise distinct (the latter being the definition of ``generic''). We have:
\begin{equation}\label{abc}
a=\lambda_1+\lambda_2+\lambda_3, \quad b=\lambda_1\lambda_2+\lambda_2\lambda_3+\lambda_3\lambda_1, \quad
c=\lambda_1\lambda_2\lambda_3.
\end{equation}
Moreover, each $\lambda_i$ has two linearly independent eigenvectors, so that $A$ is diagonalizalbe.

We construct (according to the Lagrange interpolating formula) three quadratic polynomials
$$
B_i(\lambda)=\alpha_i+\beta_i\lambda+\gamma_i\lambda^2 \quad (i=1,2,3)
$$
such that
\begin{equation}\label{B constr}
B_i(\lambda_i)=-1, \quad B_i(\lambda_j)=1, \quad B_i(\lambda_k)=1, \quad {\rm where} \quad \{j,k\}=\{1,2,3\}\setminus\{i\}.
\end{equation}
Thus, each matrix $B_i=B_i(A)$ has two double eigenvalues equal to 1 and one double eigenvalue equal to $-1$, so that \eqref{B char pol} is satisfied. As a corollary, each matrix $B_i^2$ has all six eigenvalues equal to 1, which, together with diagonalizabilty, yields \eqref{B^2}. Finally, from \eqref{B constr} there follows that the quadratic polynomial $B_1(\lambda)+B_2(\lambda)+B_3(\lambda)$ takes the value 1 at the three points $\lambda_1$, $\lambda_2$, $\lambda_3$, therefore it is identically equal to 1. This yields \eqref{B sum}.
\end{proof}

For further reference, we give the formulas for the coefficients of the Lagrange interpolation polynomials $B_i(\lambda)$ through the eigenvalues of $A$:
\begin{eqnarray}
\alpha_i & = & \frac{\lambda_i^2-b}{(\lambda_i-\lambda_j)(\lambda_i-\lambda_k)}, \label{alpha} \\
\beta_i & = &  \frac{2(\lambda_j+\lambda_k)}{(\lambda_i-\lambda_j)(\lambda_i-\lambda_k)}, \label{beta} \\
\gamma_i & = & -\frac{2}{(\lambda_i-\lambda_j)(\lambda_i-\lambda_k)}. \label{gamma}
\end{eqnarray}

Thus, the equivalence class of $f_0(x)=J\nabla H_0(x)$ consists of four vector fields $f_0(x)$ and $g_i(x)$, $i=1,2,3$, if considered projectively (up to sign). We mention the following relations for the matrices $B_i$ which are direct consequences of defining relations \eqref{B constr}:
\begin{equation}
B_1B_2=B_2B_1=-B_3,\quad B_2B_3=B_3B_2=-B_1,\quad B_3B_1=B_1B_3=-B_2.
\end{equation}
Due to this formula, the triple of vector fields associated to $g_1(x)$ (say) is
\begin{equation}\label{assoc triple}
B_1^{\rm T} g_1(x)=f_0(x), \quad B_2^{\rm T} g_1(x)=-g_3(x), \quad B_3^{\rm T} g_1(x)=-g_2(x).
\end{equation}
Our main results are the following: for two associated vector fields $f$ and $g$ the Kahan maps $\Phi_f$ and $\Phi_g$ commute (Theorem \ref{th commute}), share three independent integrals of motion (Theorem \ref{th integrals}), and share an invariant symplectic structure (Theorem \ref{th Poisson}).

\begin{lemma} \label{lemma f'f'}
The following identity holds true:
\begin{equation}  \label{f'f' thru B}
(f'_0(x))^2  =  p_0(x)I+\sum_{i=1}^3 q_i(x)B_i^{\rm T} ,
\end{equation}
where
\begin{eqnarray}
p_0(x)  & = & \frac{1}{8}{\rm tr} (f'_0(x))^2,  \label{p0}\\
q_i(x) & = & \frac{1}{8}{\rm tr}\big(B_i^{\rm T}(f'_0(x))^2\big)=\frac{1}{8}{\rm tr}\big(f'_0(x)g'_i(x)\big). \label{qi}
\end{eqnarray}
\end{lemma}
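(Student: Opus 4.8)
The plan is to diagonalize both sides of \eqref{f'f' thru B} along the eigenspaces of $A^{\rm T}$. For generic skew-Hamiltonian $A$ the matrix $A^{\rm T}$ is diagonalizable (over $\mathbb{C}$, if necessary) with three pairwise distinct eigenvalues $\lambda_1,\lambda_2,\lambda_3$, each of multiplicity two; write $\mathbb{C}^6=V_1\oplus V_2\oplus V_3$ with $V_\ell=\ker(A^{\rm T}-\lambda_\ell I)$, and let $P_\ell$ be the associated spectral projectors. By \eqref{Af'} the matrix $f_0'(x)$ commutes with $A^{\rm T}$, hence preserves each $V_\ell$; I set $M_\ell=M_\ell(x)=f_0'(x)\big|_{V_\ell}$, a $2\times 2$ block.

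The key point is that each $M_\ell$ is traceless, so that $M_\ell^2$ — and hence $(f_0'(x))^2$ restricted to $V_\ell$ — is a scalar multiple of the identity. To get tracelessness I would use that $f_0'(x)(A^{\rm T})^k=f_k'(x)$ for $k=0,1,2$ (combine \eqref{Af'} with \eqref{f' vs g'}, \eqref{f' vs h'}), that ${\rm tr}\,f_k'(x)=0$ since $f_k'(x)$ is the Jacobian of a Hamiltonian vector field (immediate from \eqref{Ham f'}), and that ${\rm tr}\big(f_0'(x)(A^{\rm T})^k\big)=\sum_{\ell}\lambda_\ell^k\,{\rm tr}\,M_\ell$; the resulting system $\sum_\ell\lambda_\ell^k\,{\rm tr}\,M_\ell=0$, $k=0,1,2$, forces ${\rm tr}\,M_\ell=0$ because the $\lambda_\ell$ are distinct (Vandermonde). (Alternatively: $A^{\rm T}$ is $\omega$-self-adjoint for $\omega(u,v)=u^{\rm T}Jv$, so the $V_\ell$ are mutually $\omega$-orthogonal and hence symplectic subspaces, and $M_\ell\in\mathfrak{sp}(V_\ell)=\mathfrak{sl}(2)$.) Cayley--Hamilton in dimension two then yields $M_\ell^2=-\det(M_\ell)\,I_{V_\ell}=:c_\ell(x)\,I_{V_\ell}$, so that $(f_0'(x))^2=\sum_{\ell}c_\ell(x)P_\ell$.

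It remains to rewrite $\sum_\ell c_\ell P_\ell$ in terms of $I$ and $B_i^{\rm T}$. Since $B_i^{\rm T}=B_i(A^{\rm T})$ acts on $V_\ell$ as the scalar $B_i(\lambda_\ell)$, which equals $-1$ for $\ell=i$ and $1$ otherwise by \eqref{B constr}, the right-hand side of \eqref{f'f' thru B} with $p_0,q_i$ as in \eqref{p0}, \eqref{qi} is also scalar on each $V_\ell$, with eigenvalue $p_0(x)+\sum_i q_i(x)B_i(\lambda_\ell)$. Using $(f_0'(x))^2=\sum_\ell c_\ell P_\ell$ one computes ${\rm tr}\,(f_0'(x))^2=2\sum_\ell c_\ell(x)$ and ${\rm tr}\big(B_i^{\rm T}(f_0'(x))^2\big)=2\sum_\ell c_\ell(x)B_i(\lambda_\ell)$, hence $p_0(x)=\tfrac14\sum_\ell c_\ell(x)$ and $q_i(x)=\tfrac14\sum_\ell c_\ell(x)B_i(\lambda_\ell)$; plugging these in and using $\sum_i B_i(\lambda_\ell)=1$ (equivalently $\sum_iB_i^{\rm T}=I$, cf. \eqref{B sum}) a one-line computation gives $p_0(x)+\sum_i q_i(x)B_i(\lambda_\ell)=c_\ell(x)$ for every $\ell$. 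Thus the two sides of \eqref{f'f' thru B} agree on each $V_\ell$ and therefore coincide. The second formula for $q_i$ in \eqref{qi} follows from the first by cyclicity of the trace, since $g_i'(x)=B_i^{\rm T}f_0'(x)$.

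I expect the tracelessness of the blocks $M_\ell$ to be the only real obstacle: a generic matrix commuting with $A^{\rm T}$ is merely block-diagonal — not scalar — on the $V_\ell$, so \eqref{f'f' thru B} would be false without it, and it is precisely the Hamiltonian nature of $f_0,f_1,f_2$ (via \eqref{Ham f'} and \eqref{f' vs g'}, \eqref{f' vs h'}) together with the genericity of $A$ that rules this out. A secondary point to keep in mind is that $\{I,B_1^{\rm T},B_2^{\rm T},B_3^{\rm T}\}$ spans only a $3$-dimensional space (because of $\sum_i B_i^{\rm T}=I$), so the coefficients $p_0,q_i$ in \eqref{f'f' thru B} are not unique; this is why the argument verifies the particular trace-normalized choice \eqref{p0}, \eqref{qi} directly instead of solving a linear system.
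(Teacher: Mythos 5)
Your argument is correct, but it proceeds quite differently from the paper's. The paper works with the rank-two matrices $f_0'-g_i'=(I-B_i^{\rm T})f_0'$: since $B_i$ has a quadruple eigenvalue $1$, the characteristic polynomial of $f_0'-g_i'$ collapses to $\lambda^6-\tfrac12\lambda^4\,{\rm tr}(f_0'-g_i')^2$, and Cayley--Hamilton, together with the commutation identities \eqref{f'g'=g'f'}, \eqref{f'f'=g'g'}, division by the generically invertible $(f_0')^4$, and summation over $i$ using $B_1+B_2+B_3=I$, yields \eqref{f'f' thru B} directly. You instead decompose $\mathbb{C}^6$ into the three two-dimensional eigenspaces $V_\ell$ of $A^{\rm T}$, note that $f_0'(x)$ preserves each $V_\ell$ by \eqref{Af'}, and show the restricted blocks are traceless (either by the Vandermonde argument from ${\rm tr}\,f_k'=0$, $k=0,1,2$, or by observing that the $V_\ell$ are symplectic and the blocks lie in $\mathfrak{sl}(2)$), so that $(f_0')^2$ is scalar on each $V_\ell$; matching these scalars against $p_0+\sum_i q_iB_i(\lambda_\ell)$ via the trace formulas and $B_i(\lambda_m)=1-2\delta_{im}$ finishes the proof, and the second expression in \eqref{qi} is indeed just cyclicity of the trace. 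Your route buys a pointwise identity valid for all $x$ (no division by $(f_0')^4$, hence no appeal to generic non-degeneracy of $f_0'(x)$ followed by a density argument), and it explains conceptually why $(f_0')^2$ must lie in ${\rm span}\{I,A^{\rm T},(A^{\rm T})^2\}$; the price is passing to the complex spectral decomposition and spectral projectors of $A^{\rm T}$, whereas the paper's computation stays entirely within real polynomial matrix identities. Both proofs use the same genericity of $A$ (three distinct double eigenvalues, diagonalizability) already needed to define the $B_i$, and your closing remarks --- that tracelessness of the blocks is the essential input, and that the coefficients in \eqref{f'f' thru B} are only determined up to the relation $B_1^{\rm T}+B_2^{\rm T}+B_3^{\rm T}=I$, so the trace-normalized choice \eqref{p0}, \eqref{qi} must be verified directly --- are both accurate.
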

\begin{proof}
Due to $f'_0-g'_i=(I-B_i^{\rm T})f'_0$ and to the fact that $B_i$ has a quadruple eigenvalue 1, we have: 
\begin{equation}\label{char pol f'-g'}
\det(\lambda I- (f'_0-g'_i)) = \lambda^6-\frac{1}{2}\lambda^4 \ {\rm tr} (f'_0-g'_i)^2 \quad (i=1,2,3). 
\end{equation}
By the theorem of Caley-Hamilton, we have:
\begin{equation} \label{pol f'-g'}
(f'_0-g'_i)^6-\frac{1}{2}(f'_0-g'_i)^4 \ {\rm tr} (f'_0-g'_i)^2 =0 \quad (i=1,2,3).
\end{equation}
Taking into account \eqref{f'g'=g'f'} and \eqref{f'f'=g'g'}, these equations result in
$$
32 (f'_0)^6 -32B_i^{\rm T}(f'_0)^6-\big(8 (f'_0)^4 -8B_i^{\rm T}(f'_0)^4\big)\big( {\rm tr} (f'_0)^2- {\rm tr}(B_i^{\rm T}(f'_0)^2)\big)=0 \quad (i=1,2,3).
$$
Upon dividing by a generically non-degenerate matrix $(f'_0)^4$, we arrive at 
$$
32 (f'_0)^2 -32B_i^{\rm T}(f'_0)^2-8(I -B_i^{\rm T})\big( {\rm tr} (f'_0)^2- {\rm tr}(B_i^{\rm T}(f'_0)^2)\big)=0 \quad (i=1,2,3).
$$
Upon summation over $i=1,2,3$ and taking into account that $B_1+B_2+B_3=I$, we find:
$$
64(f'_0)^2-8{\rm tr}((f'_0)^2)\ I-8\sum_{i=1}^3{\rm tr}(B_i^{\rm T}(f'_0)^2) \ B_i^{\rm T}=0. 
$$
This is equivalent to \eqref{f'f' thru B}. 
\end{proof}

%%%%%%%%%%%%%%%%%%%%%
%%%%%%%%%%%%%%%%%%%%%
\section{Commutativity of maps}
\label{sect commut}
%%%%%%%%%%%%%%%%%%%%%%
%%%%%%%%%%%%%%%%%%%%%%%
In this section, $f=f_0$ and $g$ are two associated vector fields, via the skew-Hamiltonian matrix $B$.

\begin{theorem}\label{th commute}
The maps
\bea
&& \Phi_{f}: x \mapsto \t x = 
\left( I - \ep f'(x) \right)^{-1} x =
\left( I + \ep f'(\t x) \right) x, \label{eq: Phi1} \\
&& \Phi_{g}: x \mapsto \widehat x = 
\left( I - \ep g'(x) \right)^{-1} x =
\left( I + \ep g'(\widehat x) \right) x, \label{eq: Phi2} 
\eea
commute:  $\Phi_{f} \circ \Phi_{g}=\Phi_{g} \circ \Phi_{f}$.
\end{theorem}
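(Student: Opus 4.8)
The plan is to show that both Kahan maps factor as products over the eigenspace decomposition of $B^{\mathrm T}$, acting identically to each other on one factor and as mutually inverse maps on the other; commutativity is then immediate.

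Since $B^2=I$, the matrix $B$ — and hence also $B^{\mathrm T}$ — is diagonalizable with spectrum contained in $\{+1,-1\}$. Put $V_\pm=\ker(B^{\mathrm T}\mp I)$, so that $\mathbb{R}^{2m}=V_+\oplus V_-$, with spectral projectors $P_\pm=\tfrac12(I\pm B^{\mathrm T})$. Because $B=\alpha I+\beta A+\gamma A^2$, the matrix $B^{\mathrm T}$ is a polynomial in $A^{\mathrm T}$, and \eqref{Af'} shows that $f'(x)$ commutes with $A^{\mathrm T}$, hence with $B^{\mathrm T}$ and with $P_\pm$, for every $x$. Thus $f'(x)$ is block-diagonal with respect to $V_+\oplus V_-$ for all $x$. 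Now exploit the bilinear symmetry $f'(a)b=f'(b)a$ (valid since $f$ is a homogeneous quadratic field): for $v_-\in V_-$ and $w\in V_+$ one has $P_+f'(v_-)w=P_+f'(w)v_-=f'(w)P_+v_-=0$, so the $(+,+)$-block of $f'(v_-)$ vanishes and, by block-diagonality, $f'(v_-)=P_-f'(v_-)P_-$; symmetrically $f'(v_+)=P_+f'(v_+)P_+$ for $v_+\in V_+$. With Euler's identity $f(v)=\tfrac12 f'(v)v$ this gives $f(V_\pm)\subseteq V_\pm$, so $f$ restricts to homogeneous quadratic fields $\bar f_\pm$ on $V_\pm$.

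Write $x=v_++v_-$ with $v_\pm\in V_\pm$. By linearity of $f'$ and the previous step, $f'(x)=f'(v_+)+f'(v_-)=\bar f_+'(v_+)\oplus\bar f_-'(v_-)$, a block-diagonal operator whose two blocks depend only on $v_+$ and $v_-$ respectively. Hence $\Phi_f(x)=(I-\epsilon f'(x))^{-1}x=\Phi_{\bar f_+}(v_+)\oplus\Phi_{\bar f_-}(v_-)$, that is $\Phi_f=\Phi_{\bar f_+}\times\Phi_{\bar f_-}$. Since $g=B^{\mathrm T}f$ as vector fields and $B^{\mathrm T}$ acts as $+I$ on $V_+$ and as $-I$ on $V_-$, the restrictions of $g$ are $\bar f_+$ on $V_+$ and $-\bar f_-$ on $V_-$; applying the same computation to $g$ gives $\Phi_g=\Phi_{\bar f_+}\times\Phi_{-\bar f_-}$, and the reversibility \eqref{eq: reversible} identifies $\Phi_{-\bar f_-}=\Phi_{\bar f_-}^{-1}$.

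The conclusion now follows: on $V_+$ both maps equal $\Phi_{\bar f_+}$, so $\Phi_f\circ\Phi_g$ and $\Phi_g\circ\Phi_f$ both act there as $\Phi_{\bar f_+}^{2}$; on $V_-$ one has $\Phi_{\bar f_-}\circ\Phi_{\bar f_-}^{-1}=\mathrm{id}=\Phi_{\bar f_-}^{-1}\circ\Phi_{\bar f_-}$. Therefore $\Phi_f\circ\Phi_g=\Phi_g\circ\Phi_f$. The one step carrying all the content is the factorization — that each diagonal block of $f'(x)$ depends only on the corresponding component of $x$, so that the two Kahan maps genuinely decouple over $V_+\oplus V_-$ — which rests on $f'$ commuting with the projectors $P_\pm$ at every point together with the bilinear symmetry of $f'$. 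The degenerate cases $B=\pm I$ (where $g=\pm f$) are covered by the same formulas.
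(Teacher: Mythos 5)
Your proof is correct, but it takes a genuinely different route from the paper. The paper argues computationally: it reduces the commutativity to the matrix identity \eqref{det}, cancels the $\epsilon^2$-terms via \eqref{f' vs g'} and \eqref{Af'}, and then verifies the remaining identity \eqref{aux} by expanding $f'(\t x)$, $g'(\widehat x)$ etc.\ with the constant second-derivative tensors, using Lemma \ref{lemma g''} together with \eqref{vf comm} and \eqref{f'f=g'g}. You instead exploit the involution $B^2=I$ structurally: since $f'(x)$ commutes with $B^{\rm T}$ by \eqref{Af'}, the eigenspace splitting $\mathbb R^{6}=V_+\oplus V_-$ of $B^{\rm T}$ is invariant, and your key observation --- that the bilinear symmetry $f'(a)b=f'(b)a$ of a homogeneous quadratic field forces the $V_\pm$-block of $f'(x)$ to depend only on the $V_\pm$-component of $x$ --- makes both Kahan maps decouple as direct products, with $\Phi_g$ agreeing with $\Phi_f$ on $V_+$ and equal to its inverse on $V_-$ by the reversibility \eqref{eq: reversible}; commutativity is then immediate. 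All steps check out (diagonalizability follows from $B^2=I$, the invariance $f(V_\pm)\subseteq V_\pm$ from Euler's identity, and the restricted fields are again homogeneous quadratic, so \eqref{eq: Phi hom} applies blockwise). What each approach buys: the paper's computation stays within the toolbox of identities (Lemmas of Sections \ref{sect vector fields}--\ref{sect assoc}) that are reused for the integrals and the symplectic structure, and it establishes the stronger matrix relation \eqref{det}; your argument is more conceptual, uses only \eqref{Af'}, $B^2=I$, homogeneity and reversibility, explains \emph{why} associated maps commute (on each invariant subspace $g=\pm f$, so $\Phi_g$ is a ``partial inverse'' of $\Phi_f$), and yields the decoupled product structure as a bonus, which is stronger than commutativity itself.
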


\begin{proof}
We have:
\begin{equation}\label{eq: hat tilde}
\left(\Phi_{g} \circ \Phi_{f}\right)(x)=\left( I - \ep g'(\t x) \right)^{-1} \left( I + \ep f'(\t x) \right) x,
\end{equation}
and
\begin{equation}\label{eq: tilde hat}
\left(\Phi_{f} \circ \Phi_{g}\right)(x)=\left( I - \ep f'(\widehat x) \right)^{-1} \left( I + \ep g'(\widehat x) \right) x.
\end{equation}
We prove the following matrix equation:
\beq \label{det}
\left( I - \ep g'(\t x) \right)^{-1} \left( I + \ep f'(\t x) \right) =\left( I - \ep f'(\widehat x) \right)^{-1} \left( I + \ep g'(\widehat x) \right),
\eeq
which is stronger than the vector equation $\left(\Phi_{f}\circ\Phi_{g}\right)(x)=\left(\Phi_{g}\circ\Phi_{f}\right)(x)$ expressing commutativity. Equation \eqref{det} is equivalent to
\beq \label{det 1}
\left( I - \ep f'(\widehat x) \right) \left( I - \ep g'(\t x) \right)^{-1} =
  \left( I + \ep g'(\widehat x) \right)\left( I + \ep f'(\t x) \right)^{-1}.
\eeq
From \eqref{f'f'=g'g'} we find:
\[
\left( I - \ep g'(\t x) \right)^{-1}=\left( I + \ep g'(\t x) \right)\left( I - \ep^2 (f'(\t x))^2 \right)^{-1},
\]
\[
\left( I + \ep f'(\t x) \right)^{-1}=\left( I - \ep f'(\t x) \right)\left( I - \ep^2(f'(\t x))^2 \right)^{-1}.
\]
With this at hand, equation \eqref{det 1} is equivalent to
\[
\left( I - \ep f'(\widehat x) \right)\left( I + \ep g'(\t x) \right)=
\left( I + \ep g'(\widehat x) \right)\left( I - \ep f'(\t x) \right).
\]
Here the quadratic in $\epsilon$ terms cancel by virtue of \eqref{f' vs g'} and \eqref{Af'}:
\[
f'(\widehat x)g'(\t x) = f'(\widehat x)B^{\rm T}f'(\t x) =  B^{\rm T}f'(\widehat x)f'(\t x)  =  g'(\widehat x)f'(\t x),
\]
so that we are left with the terms linear in $\epsilon$:
\beq \label{aux}
-  f'(\widehat x)+ g'(\t x)= g'(\widehat x)- f'(\t x).
\eeq
Since the tensors $f''$, $g''$ are constant, we have:
\[
f'(\widehat x)=f'(x)+f''(\widehat x-x)=f'(x)+2\ep f''\left(I-\ep g'(x)\right)^{-1}g(x),
\]
\[
g'(\widehat x)=g'(x)+g''(\widehat x-x)=g'(x)+2\ep g''\left(I-\ep g'(x)\right)^{-1}g(x),
\]
\[
f'(\t x)=f'(x)+f''(\t x-x)=f'(x)+2\ep f''\left(I-\ep f'(x)\right)^{-1}f(x),
\]
\[
g'(\t x)=g'(x)+g''(\t x-x)=g'(x)+2\ep g''\left(I-\ep f'(x)\right)^{-1}f(x).
\]
Thus, equation \eqref{aux} is equivalent to 
\begin{align} \label{aux2}
& f''\left(I-\ep g'(x)\right)^{-1}g(x)+g''\left(I-\ep g'(x)\right)^{-1}g(x) = \nonumber\\
& \qquad f''\left(I-\ep f'(x)\right)^{-1}f(x)+g''\left(I-\ep f'(x)\right)^{-1}f(x).
\end{align}
At this point, we use the following statement.

\begin{lemma} \label{lemma g''}
For any vector $v\in\mathbb C^6$ we have:
\beq \label{g''}
 g''(x)v=f''(x)(B^{\rm T}v), \quad  f''(x)v=g''(x)(B^{\rm T}v).
\eeq
\end{lemma}

\noindent
We compute the matrices on the left-hand side of \eqref{aux2} with the help of \eqref{g''}, \eqref{f vs g}, \eqref{f' vs g'}:
\begin{eqnarray*}
f''\left(I-\ep g'(x)\right)^{-1}g(x) & = & f''\left(I-\ep^2(f'(x))^2\right)^{-1}\left(g(x)+\ep g'(x)g(x)\right),\\
g''\left(I-\ep g'(x)\right)^{-1}g(x) & = & f''\left(I-\ep^2(f'(x))^2\right)^{-1}B^{\rm T}\left(g(x)+\ep g'(x)g(x)\right)\\
                                                              & = & f''\left(I-\ep^2 (f'(x))^2\right)^{-1}\left(f(x)+\ep f'(x)g(x)\right),
\end{eqnarray*}
and similarly    
\begin{eqnarray*}
f''\left(I-\ep f'(x)\right)^{-1}f(x) & = & f''\left(I-\ep^2 (f'(x))^2\right)^{-1}\left(f(x)+\ep f'(x)f(x)\right)\\
g''\left(I-\ep f'(x)\right)^{-1}f(x) & = &f''\left(I-\ep^2 (f'(x))^2\right)^{-1}B^{\rm T}\left(f(x)+\ep f'(x)f(x)\right)\\
                                                              & = & f''\left(I-\ep^2 (f'(x))^2\right)^{-1}\left(g(x)+\ep g'(x)f(x)\right).
\end{eqnarray*}                                                          
Collecting all the results and using \eqref{vf comm} and \eqref{f'f=g'g}, we see that the proof is complete. 
\end{proof}

{\em Proof of Lemma \ref{lemma g''}.} The identities in question are equivalent to
\beq \label{Af''}
 B^{\rm T}(f''(x)v)=f''(x)(B^{\rm T}v), \quad  B^{\rm T}(g''(x)v)=g''(x)(B^{\rm T}v).
\eeq
(Actually, both tensors $f''$ and $g''$ are constant, i.e., do not depend on $x$.) To prove the latter identities, we start with equation \eqref{Af'} written in components:
\[
\sum_k (B^{\rm T})_{ik}\frac{\partial f_k}{\partial x_\ell}=\sum_k \frac{\partial f_i}{\partial x_k}(B^{\rm T})_{k\ell}.
\]
Differentiating with respect to $x_j$, we get:
\[
\sum_k (B^{\rm T})_{ik}\frac{\partial^2 f_k}{\partial x_j\partial x_\ell}=\sum_k \frac{\partial f_i}{\partial x_j\partial x_k}(B^{\rm T})_{k\ell}.
\]
Hence,
\[
\sum_{k,\ell} (B^{\rm T})_{ik}\frac{\partial^2 f_k}{\partial x_j\partial x_\ell}v_\ell=\sum_{k,\ell} \frac{\partial f_i}{\partial x_j\partial x_k}(B^{\rm T})_{k\ell}v_\ell,
\]
which is nothing but the $(i,j)$ entry of the matrix identity \eqref{Af''}. \qed

%%%%%%%%%%%%%%%%%%%%%%
%%%%%%%%%%%%%%%%%%%%%%
\section{Integrals of motion}
\label{sect integrals}
%%%%%%%%%%%%%%%%%%%%%%%
%%%%%%%%%%%%%%%%%%%%%%%
Also in this section, $f=f_0$ and $g$ are two associated vector fields, via the skew-Hamiltonian matrix $B$.

\begin{theorem} \label{th integrals}
The maps $\Phi_{f}$ and $\Phi_{g}$ share two functionally independent conserved quantities 
\begin{equation}\label{tH}
\t H(x,\epsilon)=\epsilon^{-1}x^{\rm T}J\ \t x=\epsilon^{-1}x^{\rm T}J\left(I-\ep f'(x)\right)^{-1}x
\end{equation}
and 
\begin{equation}\label{tK}
\t K(x,\epsilon)=\epsilon^{-1} x^{\rm T}J\ \widehat x=\epsilon^{-1} x^{\rm T}J\left(I-\ep g'(x)\right)^{-1}x.
\end{equation}
\end{theorem}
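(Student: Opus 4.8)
The plan is to establish the four invariances $\tilde H\circ\Phi_f=\tilde H$, $\tilde K\circ\Phi_g=\tilde K$, $\tilde H\circ\Phi_g=\tilde H$, $\tilde K\circ\Phi_f=\tilde K$, and then the functional independence of $\tilde H$ and $\tilde K$. The first two are the property \eqref{eq: CMOQ integral} of Celledoni--McLachlan--Owren--Quispel; for completeness I would reprove $\tilde H\circ\Phi_f=\tilde H$ directly. After replacing $x$ by $\Phi_f^{-1}(x)$ and using $J^{\rm T}=-J$, this identity is equivalent to $x^{\rm T}J\big(\Phi_f(x)+\Phi_f^{-1}(x)\big)=0$. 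By \eqref{eq: Phi hom} and the reversibility \eqref{eq: reversible}, $\Phi_f(x)+\Phi_f^{-1}(x)=\big(I-\ep f'(x)\big)^{-1}x+\big(I+\ep f'(x)\big)^{-1}x=2\big(I-\ep^{2}(f'(x))^{2}\big)^{-1}x$, and since $(f'(x))^{\rm T}J=-Jf'(x)$ gives $\big((f'(x))^{2}\big)^{\rm T}J=J(f'(x))^{2}$, the matrix $N=\big(I-\ep^{2}(f'(x))^{2}\big)^{-1}$ is skew-Hamiltonian; but $x^{\rm T}JNx$ vanishes identically for every skew-Hamiltonian $N$, because $(x^{\rm T}JNx)^{\rm T}=x^{\rm T}N^{\rm T}J^{\rm T}x=-x^{\rm T}N^{\rm T}Jx=-x^{\rm T}JNx$. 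The same argument yields $\tilde K\circ\Phi_g=\tilde K$.

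Next I would note that ``$g$ associated to $f$ via $B$'' is a symmetric relation (since $B^{2}=I$ forces $f=B^{\rm T}g$), so that $\tilde H\circ\Phi_g=\tilde H$ and $\tilde K\circ\Phi_f=\tilde K$ are interchanged by the substitution $f\leftrightarrow g$, $H_{0}\leftrightarrow K$; it therefore suffices to prove $\tilde H\circ\Phi_g=\tilde H$. Put $\tilde x=\Phi_f(x)$, $\widehat x=\Phi_g(x)$. By Theorem~\ref{th commute}, $\Phi_f(\widehat x)=\Phi_g(\tilde x)$, hence by \eqref{tH}, $\tilde H(\widehat x)=\ep^{-1}\widehat x^{\rm T}J\,\Phi_g(\tilde x)$. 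Inserting $\widehat x=\big(I+\ep g'(\widehat x)\big)x$ from \eqref{eq: Phi2}, the Hamiltonian relation $\big(I+\ep g'(\widehat x)\big)^{\rm T}J=J\big(I-\ep g'(\widehat x)\big)$ from \eqref{Ham f'}, and $\big(I-\ep g'(\tilde x)\big)\Phi_g(\tilde x)=\tilde x$ (again \eqref{eq: Phi2}), and using that $g'$ is linear (so $g'(\tilde x)-g'(\widehat x)=g'(\tilde x-\widehat x)$, valid since $g=J\nabla K$ with $K$ cubic), a short computation gives
\[
\tilde H(\Phi_g(x))-\tilde H(x)=x^{\rm T}J\,g'\!\big(\tilde x-\widehat x\big)\,\Phi_g(\tilde x).
\]
Thus $\tilde H\circ\Phi_g=\tilde H$ is equivalent to the identity $x^{\rm T}J\,g'(\tilde x-\widehat x)\,\Phi_g(\tilde x)=0$.

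For this final identity I would use two remarks. First, since $g=J\nabla K$, the trilinear form $(a,b,c)\mapsto a^{\rm T}J\,g'(c)\,b$ equals $-\langle\nabla^{3}K,\ a\otimes b\otimes c\rangle$ and is therefore totally symmetric in $a,b,c$. Second, since $B^{\rm T}$ commutes with $f'(x)$ and $g'(x)$, interchanges $f$ and $g$, and $f'$, $g'$ agree on the $(+1)$-eigenspace of $B^{\rm T}$, one checks that $\tilde x-\widehat x$ lies in the $(-1)$-eigenspace of $B^{\rm T}$, so that $g'(\tilde x-\widehat x)=-f'(\tilde x-\widehat x)$ by Lemma~\ref{lemma g''}. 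The identity is then proved by inserting the explicit formulas \eqref{eq: Phi1}--\eqref{eq: Phi2}, expanding $\tilde x-\widehat x$ and $\Phi_g(\tilde x)$, and invoking Euler's relation $f'(x)x=2f(x)$ together with \eqref{f' vs g'}, \eqref{Af'}, \eqref{vf comm}, \eqref{f'g'=g'f'}, \eqref{f'f=g'g}, \eqref{f'f'=g'g'} and Lemma~\ref{lemma g''} --- the same package of identities used to prove Theorem~\ref{th commute}; the terms cancel in pairs. This bookkeeping is the one genuinely laborious part, and it is where I expect the main difficulty to lie.

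Finally, for functional independence I would let $\ep\to0$. From $\Phi_f(x)=x+2\ep f(x)+O(\ep^{2})$ one gets $\tilde H(x,\ep)=\ep^{-1}x^{\rm T}J\Phi_f(x)=2x^{\rm T}Jf(x)+O(\ep)=-2x^{\rm T}\nabla H_{0}(x)+O(\ep)=-6H_{0}(x)+O(\ep)$ by Euler's identity for the cubic $H_{0}$, and likewise $\tilde K(x,\ep)=-6K(x)+O(\ep)$. Since $B$ has the two eigenvalues $\pm1$ it is not scalar, so $K=\alpha H_{0}+\beta H_{1}+\gamma H_{2}$ with $(\beta,\gamma)\ne(0,0)$, and for generic $A$ the differentials $dH_{0}$ and $dK$ are linearly independent --- because $\nabla H_{0},\nabla H_{1},\nabla H_{2}$ are generically linearly independent (as noted at the end of the proof of the first Proposition of Sect.~\ref{sect int fam}) and $\nabla K=\alpha\nabla H_{0}+\beta\nabla H_{1}+\gamma\nabla H_{2}$. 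Hence $d\tilde H\wedge d\tilde K\not\equiv0$ at $\ep=0$; as $\tilde H$ and $\tilde K$ are rational in $(x,\ep)$, this $2$-form is nonzero for $\ep$ in a punctured neighbourhood of $0$, hence generically, which establishes the functional independence.
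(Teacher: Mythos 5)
Your treatment of the two ``easy'' invariances and of functional independence is fine: the reduction of $\widetilde H\circ\Phi_f=\widetilde H$ to $x^{\rm T}J\big(\Phi_f(x)+\Phi_f^{-1}(x)\big)=0$ and the skew-symmetry of $J\left(I-\ep^2(f'(x))^2\right)^{-1}$ is a correct (and neat) alternative to the paper's computation, and the $\ep\to0$ argument for independence is sound. Your reduction of the cross-invariance is also correct as far as it goes: using Theorem \ref{th commute}, $\widehat x=(I+\ep g'(\widehat x))x$, \eqref{Ham f'} and linearity of $g'$ one does get $\widetilde H(\Phi_g(x))-\widetilde H(x)=x^{\rm T}Jg'(\widetilde x-\widehat x)\,\Phi_g(\widetilde x)$, and your two remarks (total symmetry of the trilinear form, and $B^{\rm T}(\widetilde x-\widehat x)=-(\widetilde x-\widehat x)$, hence $g'(\widetilde x-\widehat x)=-f'(\widetilde x-\widehat x)$) are both verifiable.

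The genuine gap is the last step: the identity $x^{\rm T}Jg'(\widetilde x-\widehat x)\,\Phi_g(\widetilde x)=0$ is exactly the new content of the theorem, and you do not prove it --- you only assert that after ``inserting the explicit formulas and expanding, the terms cancel in pairs,'' while yourself flagging this as the main difficulty. It is not a routine expansion: the factor $\Phi_g(\widetilde x)=\left(I-\ep g'(\widetilde x)\right)^{-1}\widetilde x$ contains a matrix inverse evaluated at the image point $\widetilde x$, so the expression is a nested rational function of $x$, not a finite polynomial combination of the identities \eqref{f' vs g'}, \eqref{Af'}, \eqref{vf comm}, \eqref{f'g'=g'f'}, \eqref{f'f=g'g}, \eqref{f'f'=g'g'}, and no pairing of terms is exhibited. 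For comparison, the paper closes this step without invoking commutativity of the maps at all: it starts from the even representation \eqref{tK alt}, uses \eqref{Ham f'}, \eqref{f'g'=g'f'}, \eqref{f'f'=g'g'} to collapse the conjugation and obtain $\widetilde K(\widetilde x)=x^{\rm T}Jg'(\widetilde x)x$, then expands $g'(\widetilde x)=g'(x)+2\ep g''\left(I-\ep f'(x)\right)^{-1}f(x)$, kills the odd-in-$\ep$ part via the skew-symmetry of $B^{\rm T}\left(I-\ep^2(f'(x))^2\right)^{-1}J$ together with \eqref{aux7}, and reassembles the even part into $\widetilde K(x)$ using \eqref{f'f=g'g}. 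Either you carry out an analogous explicit computation for your residual identity, or you switch to such an argument; as it stands, the central claim that $\Phi_f$ and $\Phi_g$ \emph{share} both integrals is left unproved.
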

Before proving this theorem, we observe different expressions for these functions. Expanding \eqref{tH} in power series with respect to $\epsilon$, we see:
$$
\t H(x, \epsilon) = \sum_{k=0}^\infty \epsilon^{k-1} x^{\rm T}J(f'(x))^kx=-\sum_{k=0}^\infty \epsilon^{k-1}x^{\rm T} \Big(\nabla^2 H(x) J\nabla^2 H(x) \cdots J\nabla^2 H(x)\Big) x.
$$
The matrix in the parentheses (involving $k$ times $\nabla^2 H(x)$ and $k-1$ times $J$) is symmetric if $k$ is odd, and skew-symmetric if $k$ is even. Therefore, all terms with even $k$ vanish, and we arrive at
$$
\t H(x, \epsilon) = \sum_{k=0}^\infty \epsilon^{2k} x^{\rm T}J(f'(x))^{2k+1}x,
$$
or, finally,
\begin{equation} \label{tH alt}
\t H(x, \epsilon) =x^{\rm T} J\left(I-\epsilon^2(f'(x))^2\right)^{-1} f'(x) x =
2x^{\rm T} J\left(I-\epsilon^2(f'(x))^2\right)^{-1} f(x) .
\end{equation}
At the last step we used that $f(x)$ is homogeneous of degree 2, so that $f'(x)x=2f(x)$.  Of course, analogous expressions hold true for the function $\t K(x,\epsilon)$:
\begin{equation} \label{tK alt}
\t K(x, \epsilon) = x^{\rm T} J\left(I-\epsilon^2(g'(x))^2\right)^{-1} g'(x) x =
2 x^{\rm T} J\left(I-\epsilon^2(g'(x))^2\right)^{-1} g(x) .
\end{equation}
Formulas \eqref{tH alt}, \eqref{tK alt} also clearly display the asymptotics
$$
\t H(x, \epsilon)=2x^{\rm T} Jf(x) +O(\epsilon^2)=-2x^{\rm T} \nabla H(x)+O(\epsilon^2)=-6H(x)+O(\epsilon^2),
$$
and analogously 
$$
\t K(x, \epsilon)=2 x^{\rm T} Jg(x) +O(\epsilon^2)=-2x^{\rm T} \nabla K(x)+O(\epsilon^2)=-6K(x)+O(\epsilon^2).
$$

{\em Proof of Theorem \ref{th integrals}.}
We first show that $\t H(x,\epsilon)$ is an integral of motion of the map $\Phi_f$ (this is a result from \cite{CMOQ1}, which holds true for arbitrary Hamiltonian vector fields). For this goal, we compute with the help of \eqref{eq: Phi1}:
\begin{eqnarray*}
\t H(\t x,\epsilon) & = & \epsilon^{-1}\t x^{\rm T}J\left(I-\ep f'(\t x)\right)^{-1}\t x  \\
 & = & \epsilon^{-1}x^{\rm T}\left(I+\epsilon f'(\t x)\right)^{\rm T}J\left(I-\ep f'(\t x)\right)^{-1}\left(I-\epsilon f'(x)\right)^{-1} x .
\end{eqnarray*}
Taking into account \eqref{Ham f'} in the form $(f'(\t x))^{\rm T} J=-Jf'(\t x)$, we arrive at
\begin{equation}
\t H(\t x,\epsilon)\; =\; \epsilon^{-1}x^{\rm T}J(I-\epsilon f'(\t x))\left(I-\ep f'(\t x)\right)^{-1}\left(I-\epsilon f'(x)\right)^{-1} x \; = \; \t H(x,\epsilon).
\end{equation}

Next, we show that $\t K(x,\epsilon)$ also is an integral of motion of the map $\Phi_f$. For this goal, we first compute, based on \eqref{tK alt}:
\begin{eqnarray*}
\t K(\t x,\epsilon) & = &  \t x^{\rm T}J\left(I-\ep^2 (g'(\t x))^2\right)^{-1}g'(\t x)\t x  \\
 & = &   x^{\rm T}\left(I+\epsilon f'(\t x)\right)^{\rm T}J\left(I-\ep^2 (g'(\t x))^2\right)^{-1}g'(\t x)\left(I+\epsilon f'(\t x)\right) x \\
  & = &   x^{\rm T}J\left(I-\epsilon f'(\t x)\right)\left(I-\ep^2 (g'(\t x))^2\right)^{-1}g'(\t x)\left(I+\epsilon f'(\t x)\right) x .
\end{eqnarray*}
By virtue of \eqref{f'g'=g'f'} and \eqref{f'f'=g'g'} we arrive at
\begin{equation}\label{aux3}
\t K(\t x,\epsilon)=   x^{\rm T}Jg'(\t x) x .
\end{equation}
Now we compute as in the previous section:
\begin{eqnarray}\label{aux4}
g'(\t x) & = & g'(x)+g''(\t x-x)=g'(x)+2\ep g''\left(I-\ep f'(x)\right)^{-1}f(x)\nonumber\\
 & = & g'(x)+2\ep g''\left(I-\ep^2 (f'(x))^2\right)^{-1}\left(I+\ep f'(x)\right)f(x).
\end{eqnarray}
We will show that the contribution to \eqref{aux3} of the terms in \eqref{aux4} with odd powers of $\epsilon$ vanishes:
\begin{eqnarray}\label{aux5}
 x^{\rm T}J g''\left(I-\ep^2 (f'(x))^2\right)^{-1}f(x)x=0.
\end{eqnarray}
For this, we use the fact that for an arbitrary vector $v\in\mathbb C^6$  we have:
\begin{equation} \label{aux7}
g''vx=g'(x)v.
\end{equation}
Indeed,  due to homogeneity of $g'(x)$,
$$
(g''vx)_i=\sum_{j,k} \frac{\partial^2 g_i}{\partial x_j\partial x_k}v_jx_k=\sum_{j}\frac{\partial g_i}{\partial x_j}v_j=(g'(x)v)_i.
$$
Due to \eqref{aux7}, the left-hand side of \eqref{aux5} is equal  to
\begin{eqnarray*}
\lefteqn{  x^{\rm T}J g'(x)\left(I-\ep^2 (f'(x))^2\right)^{-1}f(x) } \\
& = & -x^{\rm T}\nabla^2 K(x)\left(I-\ep^2 (f'(x))^2\right)^{-1}f(x)\\
& = & -2(\nabla K(x))^{\rm T}\left(I-\ep^2 (f'(x))^2\right)^{-1}f(x) \\
& = & -2(\nabla H(x))^{\rm T}\Big(B^{\rm T}\left(I-\ep^2 (f'(x))^2\right)^{-1} J\Big)\nabla H(x).
\end{eqnarray*}
One easily sees with the help of \eqref{Ham f'}, \eqref{Af'} and \eqref{cond A} that the matrix $B^{\rm T}\left(I-\ep^2 (f'(x))^2\right)^{-1} J$ is skew-symmetric, which finishes the proof of \eqref{aux5}. With this result, \eqref{aux3} turns into 
\[
\t K(\t x,\epsilon)=  x^{\rm T}Jg'(x) x +2 \epsilon^2g''\left(I-\ep^2 (f'(x))^2\right)^{-1} f'(x)f(x)x.
\]
By virtue of \eqref{f'f'=g'g'}, \eqref{f'f=g'g} and  \eqref{aux7}, we put the latter formula as
\begin{eqnarray*}
\t K(\t x,\epsilon) & = &   x^{\rm T}Jg'(x) x +2\epsilon^2 x^{\rm T}Jg''\left(I-\ep^2 (g'(x))^2\right)^{-1} g'(x)g(x)x \\
 & = & 2  x^{\rm T}Jg(x) +2\epsilon^2  x^{\rm T}Jg'(x)\left(I-\ep^2 (g'(x))^2\right)^{-1} g'(x)g(x) \\
 & = & 2 x^{\rm T} J \left(I-\ep^2 (g'(x))^2\right)^{-1} g(x) \; = \; \t K(x,\epsilon).
\end{eqnarray*}
This finishes the proof of the theorem.
\qed

We mention that, as follows from \eqref{B sum}, the three integrals $\widetilde K_i(x,\epsilon)$ corresponding to the vector fields $g_i(x)$ $(i=1,2,3)$ are related by 
$$
\widetilde K_1(x,\epsilon)+\widetilde K_2(x,\epsilon)+\widetilde K_3(x,\epsilon)=\widetilde H(x,\epsilon).
$$

%%%%%%%%%%%%%%%%%%%%%%%
%%%%%%%%%%%%%%%%%%%%%%%
\section{Invariant Poisson structure}
\label{sect symplectic}
%%%%%%%%%%%%%%%%%%%%%%%%%
%%%%%%%%%%%%%%%%%%%%%%%%%

\begin{theorem} \label{th Poisson}
The map $\Phi_f$ is Poisson with respect to the brackets with the Poisson tensor $\Pi(x)$ given by
\begin{eqnarray} \label{Pi short}
\Pi(x)  & = & J- \epsilon^2 (f'(x))^2J \label{Pi short} \\
 & = &  \big(1-\epsilon^2p_0(x)\big)J-\epsilon^2\sum_{i=1}^3 q_i(x) B_i^{\rm T}J \label{Pi long}\\
 & = & \big(1-\epsilon^2p(x)\big)J-\epsilon^2 q(x) A^{\rm T} J-\epsilon^2 r(x)(A^{\rm T})^2J, \label{Pi medium}
\end{eqnarray}
where $p_0(x)$ and $q_i(x)$ are quadratic polynomials defined in \eqref{p0}, \eqref{qi}, and 
\begin{eqnarray}
p(x) & = & p_0(x)+\sum_{i=1}^3 \alpha_iq_i(x),   \label{p}\\
q(x) & = & \sum_{i=1}^3 \beta_iq_i(x),    \label{q}\\
r(x) & = & \sum_{i=1}^3 \gamma_iq_i(x).   \label{r}
\end{eqnarray}
If the vector field $g(x)$ is associated to $f(x)$ then $\Phi_g$ is Poisson with respect to the same bracket.
\end{theorem}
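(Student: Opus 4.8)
The plan is to establish three things: that the three displayed expressions for $\Pi(x)$ agree; that $\Pi(x)$ is a genuine Poisson tensor, i.e.\ skew-symmetric and satisfying the Jacobi identity; and that $\Phi_f$ (and then $\Phi_g$) preserves it. For the last point I use that a map $\Phi$ is Poisson for the bracket $\{F,G\}=(\nabla F)^{\rm T}\Pi\,\nabla G$ exactly when $d\Phi(x)\,\Pi(x)\,(d\Phi(x))^{\rm T}=\Pi(\Phi(x))$, i.e.\ when the bivector $\Pi$ is $\Phi$-invariant, so this reduces to a matrix identity once $d\Phi_f$, $d\Phi_g$ are taken from \eqref{Jac}. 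The equivalence of the formulas and the skew-symmetry of $\Pi$ are bookkeeping: inserting \eqref{f'f' thru B} into the first expression for $\Pi(x)$ gives \eqref{Pi long}, and substituting $B_i^{\rm T}=\alpha_iI+\beta_iA^{\rm T}+\gamma_i(A^{\rm T})^2$ and collecting the coefficients of $J$, $A^{\rm T}J$, $(A^{\rm T})^2J$ as in \eqref{p}--\eqref{r} gives \eqref{Pi medium}; and since $J$, $A^{\rm T}J=JA$, $(A^{\rm T})^2J=JA^2$ are skew-symmetric, so is $\Pi$, because $\big((f'(x))^2J\big)^{\rm T}=J^{\rm T}\big((f'(x))^{\rm T}\big)^2=-J\,\big(J(f'(x))^2J^{-1}\big)=-(f'(x))^2J$ by \eqref{Ham f'} in the form $(f'(x))^{\rm T}=-Jf'(x)J^{-1}$.

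The core is the invariance computation. Write $F=f'(x)$ and $\widetilde F=f'(\t x)$, so $d\Phi_f(x)=(I-\ep F)^{-1}(I+\ep\widetilde F)$ by \eqref{Jac}. Transposing via \eqref{Ham f'} one gets $(I+\ep\widetilde F)^{\rm T}=J(I-\ep\widetilde F)J^{-1}$ and $(I-\ep F)^{-\rm T}=J(I+\ep F)^{-1}J^{-1}$, so with $\Pi(x)=(I-\ep^2F^2)J$ and $J^2=-I$,
\begin{align*}
d\Phi_f(x)\,\Pi(x)\,(d\Phi_f(x))^{\rm T}
&=(I-\ep F)^{-1}(I+\ep\widetilde F)(I-\ep^2F^2)J\cdot J(I-\ep\widetilde F)(I+\ep F)^{-1}J^{-1}\\
&=(I-\ep F)^{-1}(I+\ep\widetilde F)(I-\ep^2F^2)(I-\ep\widetilde F)(I+\ep F)^{-1}J .
\end{align*}
The key fact is that $(f'(y))^2$ commutes with $f'(z)$ for all $y,z$: by Lemma \ref{lemma f'f'} the former is a polynomial in $A^{\rm T}$, which commutes with every Jacobi matrix by \eqref{Af'}. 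Hence $(I-\ep^2F^2)$ commutes with $(I\pm\ep\widetilde F)$ and $(I-\ep^2\widetilde F^2)$ with $(I\pm\ep F)$, so using $(I+\ep\widetilde F)(I-\ep^2F^2)(I-\ep\widetilde F)=(I-\ep^2\widetilde F^2)(I-\ep^2F^2)$ and $(I-\ep^2F^2)(I+\ep F)^{-1}=I-\ep F$ the right-hand side collapses to $(I-\ep F)^{-1}(I-\ep^2\widetilde F^2)(I-\ep F)J=(I-\ep^2\widetilde F^2)J=\Pi(\t x)$. For $\Phi_g$ no new argument is needed: since $g$ is associated to $f=f_0$, \eqref{f'f'=g'g'} gives $(g'(x))^2=(f_0'(x))^2$, hence $\Pi(x)=(I-\ep^2(g'(x))^2)J$; moreover $g$ is Hamiltonian (Definition \ref{def assoc}), so $g'$ satisfies \eqref{Ham f'}, and $(g'(y))^2$, being equal to $(f_0'(y))^2$, is a polynomial in $A^{\rm T}$ commuting with $g'(z)=B^{\rm T}f_0'(z)$; the computation above then applies verbatim with $g$, $\widehat x$ in place of $f_0$, $\t x$, giving $d\Phi_g(x)\,\Pi(x)\,(d\Phi_g(x))^{\rm T}=\Pi(\widehat x)$.

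It remains to verify the Jacobi identity for $\Pi$, and this is the one step that is not mere bookkeeping --- the main obstacle. Writing $\Pi=(I-\ep^2R)J$ with $R=(f_0'(x))^2=p_0I+\sum_iq_iB_i^{\rm T}$, the Jacobi identity is a polynomial identity in $\ep$; collecting powers of $\ep$ and using that $J$ is constant, it splits into the vanishing of the Schouten brackets $[J,RJ]$ and $[RJ,RJ]$. I would verify these directly from \eqref{f'f' thru B}, \eqref{Af'} and \eqref{d2H}; equivalently, it suffices to show that the Nijenhuis torsion of the recursion-type operator $R$ vanishes, whence, since $RJ$ has already been shown to be skew-symmetric (so $R$ is self-adjoint with respect to $J^{-1}$), the Magri--Morosi compatibility criterion gives that $J-\ep^2RJ$ is Poisson. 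This verification is where the real work lies; everything else is an exercise in the identities collected in Sections~\ref{sect vector fields}--\ref{sect assoc}.
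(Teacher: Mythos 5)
Your treatment of the equivalence of the three formulas, the skew-symmetry of $\Pi$, and the invariance property is correct and is essentially the paper's own argument in a lightly transposed form: the paper right-multiplies $d\Phi_f(x)\Pi(x)(d\Phi_f(x))^{\rm T}=\Pi(\t x)$ by $J$ and uses that $\Pi(x)J$ is a combination of $I$, $A^{\rm T}$, $(A^{\rm T})^2$ (Lemma \ref{lemma f'f'}) and hence commutes with $f'$ at any point by \eqref{Af'}, while you conjugate the transposes through $J$ via \eqref{Ham f'}; the key commutation facts are identical, and your reduction of the $\Phi_g$ case to \eqref{f'f'=g'g'} is also how the statement is meant to be obtained.

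However, there is a genuine gap: you never prove that $\Pi$ satisfies the Jacobi identity, and this is precisely the substantive content of the theorem (invariance of a skew-symmetric tensor field is the easy half). Your reduction of the Jacobi identity for $\Pi=J-\epsilon^2RJ$ to the vanishing of the Schouten brackets $[J,RJ]$ and $[RJ,RJ]$ is a correct reformulation, but ``I would verify these directly'' is a plan, not a proof; and the alternative route via vanishing Nijenhuis torsion of $R$ plus Magri--Morosi is itself an unverified (and, as stated, incomplete) claim --- torsion-freeness of $R$ together with skew-symmetry of $RJ$ does not come for free here and would have to be established from the same structural identities, with additional compatibility hypotheses checked. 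In the paper this step is where all the work sits: a computation (Maple-assisted) showing that a tensor of the form \eqref{Pi medium} is Poisson if and only if $\nabla q=C_1\nabla r$ and $\nabla p=C_2\nabla r$ as in \eqref{grad qr}--\eqref{grad pr}; then Lemma \ref{lemma grads pq}, i.e.\ $\nabla p_0=B_i\nabla q_i$, proved by a nontrivial index manipulation using $(B^{\rm T})^2=I$ and \eqref{Af'}; and finally the scalar identities $\beta_i=(\lambda_i-a)\gamma_i$ and $\alpha_i-1=(\lambda_i^2-a\lambda_i+b)\gamma_i$ coming from the Lagrange interpolation formulas \eqref{alpha}--\eqref{gamma} and \eqref{abc}. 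None of these ingredients (or substitutes for them) appear in your proposal, so as it stands the proof of the theorem is incomplete.
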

\begin{proof}
First, we prove that
\beq \label{Pois prop}
d\Phi_f(x)\Pi(x) (d\Phi_f(x))^{\rm T} =\Pi(\t x).
\eeq
With the expression \eqref{Jac} for $d\Phi_f(x)$, \eqref{Pois prop} turns into
$$
\big(I+\epsilon f'(\t x)\big)\Pi(x)\big(I+\epsilon f'(\t x)\big)^{\rm T}=\big(I-\epsilon f'(x)\big)\Pi(\t x)\big(I-\epsilon f'(x)\big)^{\rm T}.
$$
Multiplying from the right by $J$ and taking into account \eqref{Ham f'}, we arrive at:
$$
\big(I+\epsilon f'(\t x)\big)\Pi(x)J\big(I-\epsilon f'(\t x)\big)=
\big(I-\epsilon f'(x)\big)\Pi(\t x)J\big(I+\epsilon f'(x) \big).
$$
According to Lemma \ref{lemma f'f'}, the matrix $\Pi(x)J$ is a linear combination of $I$, $A^{\rm T}$ and $(A^{\rm T})^2$ , therefore, by virtue of \eqref{Af'}, it commutes with $f'(\t x)$ (actually, with $f'$ evaluated at any point). Thus, the latter equation is equivalent to 
$$
\big(I-\epsilon (f'(\t x))^2\big)\Pi(x)J=\Pi(\t x)J\big(I-\epsilon (f'(x))^2\big),
$$
which is obviously true due to \eqref{Pi short} .

It remains to prove that $\Pi$ is indeed a Poisson tensor. 
\begin{lemma}
A matrix $\Pi(x)$ given by \eqref{Pi medium} is a Poisson tensor if and only if the following two conditions are satisfied:
\begin{eqnarray}
\nabla q(x) & = & C_1 \nabla r(x), \label{grad qr} \\
\nabla p(x) & = & C_2 \nabla r(x), \label{grad pr} 
\end{eqnarray}
where
\begin{eqnarray}
C_1 & = & A-aI,  \label{C1}\\
C_2 & = & A^2-a A+b I, \label{C2}
\end{eqnarray}
and
\begin{equation}
a=\tfrac{1}{2}{\rm tr}(A), \quad b=\tfrac{1}{8}({\rm tr}(A))^2-\tfrac{1}{4}{\rm tr}(A^2).
\end{equation}
\end{lemma}
\begin{proof} One has to verify the Jacobi identity
\beq \label{Jacobi id}
\{x_i,\{x_j,x_k\}\}+\{x_j,\{x_k,x_i\}\}+\{x_k,\{x_i,x_j\}\}=0
\eeq
for the 20 different triples of indices $\{i,j,k\}$ from $\{1,\ldots,6\}$. A straightforward computation based on the expression \eqref{Pi medium} for $\Pi$ (one should use a symbolic package like Maple for this computation) shows that the left-hand sides of the expressions \eqref{Jacobi id} are polynomials of order 2 in $\epsilon^2$. The system of 20 equations obtained by requiring that the coefficients by $\epsilon^2$ are equal to 0, turns out to have rank 12, with the solution given by the 12 relations $\nabla q(x)=C_1\nabla r(x)$ and $\nabla p(x)=C_2\nabla r(x)$. Then the coefficients by $\epsilon^4$ vanish identically by virtue of these 12 relations. 
\end{proof}

To finish the proof of Theorem \ref{th Poisson}, we have to prove that the functions $p(x)$, $q(x)$, $r(x)$ from \eqref{p}, \eqref{q}, \eqref{r} satisfy equations \eqref{grad qr}, \eqref{grad pr}. The following statement will be used towards this goal.

\begin{lemma} \label{lemma grads pq}
The following identities hold true:
\beq \label{grads pq}
\nabla p_0(x)=B_i \nabla q_i(x) \quad (i=1,2,3).
\eeq
\end{lemma}

From Lemma \ref{lemma grads pq} we derive:
\begin{eqnarray*}
\nabla p(x) & = &  \nabla p_0(x) +\sum_{i=1}^3 \alpha_i\nabla q_i(x) \ = \ \Big(I+\sum_{i=1}^3 \alpha_i B_i\Big) \nabla p_0(x),\\
\nabla q(x) & = & \sum_{i=1}^3 \beta_i\nabla q_i(x) \ = \ \sum_{i=1}^3 \beta_i B_i \nabla p_0(x),\\
 \nabla r(x) & = & \sum_{i=1}^3 \gamma_i\nabla q_i(x) \ = \ \sum_{i=1}^3 \gamma_i B_i \nabla p_0(x).
 \end{eqnarray*}
 So, it is sufficient to prove that
 \begin{eqnarray}
 \sum_{i=1}^3 \beta_i B_i & = & (A-aI)\sum_{i=1}^3 \gamma_i B_i , \\
 I+\sum_{i=1}^3 \alpha_i B_i & = & (A^2-aA+bI)\sum_{i=1}^3 \gamma_i B_i .
 \end{eqnarray}
 These statements are reduced to
 \begin{eqnarray}
 \sum_{i=1}^3 \beta_i B_i(\lambda) & = & (\lambda-a)\sum_{i=1}^3 \gamma_i B_i(\lambda)
   \quad {\rm for}\quad \lambda=\lambda_1,\lambda_2,\lambda_3, \label{pois lemma cond 1}\\
1+\sum_{i=1}^3 \alpha_i B_i(\lambda) & = & (\lambda^2-a\lambda+b)\sum_{i=1}^3 \gamma_i B_i(\lambda)
   \quad {\rm for}\quad \lambda=\lambda_1,\lambda_2,\lambda_3 .  \label{pois lemma cond 2}
\end{eqnarray}

Condition \eqref{pois lemma cond 1} reads:
 $$
 \beta_j+\beta_k-\beta_i=(\lambda_i-a)(\gamma_j+\gamma_k-\gamma_i),
 $$
 or, equivalently, since $\beta_1+\beta_2+\beta_3=0$ and $\gamma_1+\gamma_2+\gamma_3=0$,
 $$
 \beta_i=(\lambda_i-a)\gamma_i.
 $$
 This follows immediately from \eqref{beta}, \eqref{gamma} and the expression $a=\lambda_1+\lambda_2+\lambda_3$ from \eqref{abc}.
 
 Similarly, condition \eqref{pois lemma cond 2} reads:
 $$
1+ \alpha_j+\alpha_k-\alpha_i=(\lambda_i^2-a\lambda_i+b)(\gamma_j+\gamma_k-\gamma_i),
 $$
 or, equivalently, since $\alpha_1+\alpha_2+\alpha_3=1$,
  $$
\alpha_i-1=(\lambda_i^2-a\lambda_i+b)\gamma_i.
 $$
 Also this follows from \eqref{alpha}, \eqref{gamma} and expressions for $a$, $b$ from \eqref{abc}.
\end{proof}

\medskip

{\em Proof of Lemma \ref{lemma grads pq}.}
In this proof we write $B$ for $B_i$ and $q$ for $q_i$.
We use the characteristic property \eqref{Af'} of the matrix $f'$ which yields $B^{\rm T}f'=f' B^{\rm T}$, or, in components:
\beq\label{Af' comp}
\sum_{k}(B^{\rm T})_{jk}\frac{\partial f_k}{\partial x_m}=\sum_{k}\frac{\partial f_j}{\partial x_k}(B^{\rm T})_{km}\quad \forall j,m,
\eeq 
as well as its derivative with respect to $x_\ell$: 
\beq\label{Af'' comp}
\sum_{k}(B^{\rm T})_{jk}\frac{\partial^2 f_k}{\partial x_m\partial x_\ell}=\sum_{k}\frac{\partial^2 f_j}{\partial x_k\partial x_\ell}(B^{\rm T})_{km}\quad \forall j,m,\ell.
\eeq 

We compute the components of $\nabla q$:
\begin{equation}\label{grad q comp}
\frac{\partial q}{\partial x_\ell}=
\frac{1}{8}\frac{\partial\ {\rm tr}(B^{\rm T}(f')^2)}{\partial x_\ell}=
\frac{1}{8}\sum_{j,k,m}(B^{\rm T})_{jk}\frac{\partial^2 f_k}{\partial x_m\partial x_\ell}\frac{\partial f_m}{\partial x_j}
+\frac{1}{8}\sum_{j,k,m}(B^{\rm T})_{jk}\frac{\partial f_k}{\partial x_m}\frac{\partial^2 f_m}{\partial x_j\partial x_\ell}.
\end{equation}
The contribution to
\beq \label{A grad q comp}
(B\nabla q)_n=\sum_\ell (B^{\rm T})_{\ell n}\frac{\partial q}{\partial x_\ell}
\eeq
of the first sum in \eqref{grad q comp} is
\begin{eqnarray}
\lefteqn{\frac{1}{8}\sum_\ell (B^{\rm T})_{\ell n} \sum_{j,m}\left(\sum_k (B^{\rm T})_{jk}\frac{\partial^2 f_k}{\partial x_m\partial x_\ell}\right)\frac{\partial f_m}{\partial x_j} \qquad {\rm (use\; \eqref{Af'' comp})}}\nonumber \\
& = & \frac{1}{8}\sum_\ell (B^{\rm T})_{\ell n} \sum_{j,m}\left(\sum_k\frac{\partial^2 f_j}{\partial x_k\partial x_m} (B^{\rm T})_{k\ell}\right)\frac{\partial f_m}{\partial x_j}\nonumber\\
& = & \frac{1}{8}\sum_{j,k,m} \frac{\partial^2 f_j}{\partial x_k\partial x_m}\frac{\partial f_m}{\partial x_j}\sum_\ell(B^{\rm T})_{\ell n}(B^{\rm T})_{k\ell}
\qquad {\rm (use}\; (B^{\rm T})^2=I)
\nonumber\\
& = & \frac{1}{8}\sum_{j,k,m} \frac{\partial^2 f_j}{\partial x_k\partial x_m}\frac{\partial f_m}{\partial x_j}\delta_{kn} \; = \; 
\frac{1}{8}\sum_{j,m} \frac{\partial^2 f_j}{\partial x_n\partial x_m}\frac{\partial f_m}{\partial x_j}.
\end{eqnarray}
Similarly, the contribution of the second sum in \eqref{grad q comp} to \eqref{A grad q comp} is
\begin{eqnarray}
\lefteqn{\frac{1}{8}\sum_\ell (B^{\rm T})_{\ell n} \sum_{j,m}\left(\sum_k (B^{\rm T})_{jk}\frac{\partial f_k}{\partial x_m}\right)\frac{\partial^2 f_m}{\partial x_j\partial x_\ell} \qquad {\rm (use\; \eqref{Af' comp})}}\nonumber \\
& = & \frac{1}{8}\sum_\ell (B^{\rm T})_{\ell n} \sum_{j,m}\left(\sum_k \frac{\partial f_j}{\partial x_k}(B^{\rm T})_{km}\right)\frac{\partial^2 f_m}{\partial x_j\partial x_\ell}\nonumber\\
& = & \frac{1}{8}\sum_\ell (B^{\rm T})_{\ell n} \sum_{j,k} \frac{\partial f_j}{\partial x_k}\left(\sum_m (B^{\rm T})_{km}\frac{\partial^2 f_m}{\partial x_j\partial x_\ell}\right) \qquad {\rm (use\; \eqref{Af'' comp})}\nonumber\\
& = & \frac{1}{8}\sum_\ell (B^{\rm T})_{\ell n} \sum_{j,k} \frac{\partial f_j}{\partial x_k}\left(\sum_m \frac{\partial^2 f_k}{\partial x_j\partial x_m}
(B^{\rm T})_{m\ell}\right) \nonumber\\
& = & \frac{1}{8}\sum_{j,k,,m} \frac{\partial f_j}{\partial x_k} \frac{\partial^2 f_k}{\partial x_j\partial x_m}\sum_\ell(B^{\rm T})_{\ell n}(B^{\rm T})_{m\ell}
\qquad {\rm (use}\; (B^{\rm T})^2=I)
\nonumber\\
& = & \frac{1}{8}\sum_{j,k,m} \frac{\partial f_j}{\partial x_k} \frac{\partial^2 f_k}{\partial x_j\partial x_m}\delta_{nm} \; = \; 
\frac{1}{8}\sum_{j,k} \frac{\partial f_j}{\partial x_k} \frac{\partial^2 f_k}{\partial x_j\partial x_n} .
\end{eqnarray}
Collecting all the results, we find:
$$
(B\nabla q)_n=\frac{1}{8}\sum_{j,m} \frac{\partial^2 f_j}{\partial x_n\partial x_m}\frac{\partial f_m}{\partial x_j}+
\frac{1}{8}\sum_{j,k} \frac{\partial f_j}{\partial x_k} \frac{\partial^2 f_k}{\partial x_j\partial x_n}=\frac{1}{8}\frac{\partial\ {\rm tr}((f')^2)}{\partial x_n}=(\nabla p_0)_n,
$$
which finishes the proof. \qed
\section{Differential equations for the conserved quantities of maps $\Phi_f$, $\Phi_g$}
\label{sect diff eqs}
%%%%%%%%%%%%%%%%%%%%%%%%%%%%%%%
%%%%%%%%%%%%%%%%%%%%%%%%%%%%%%%

\begin{theorem} \label{th diff eqs}
The rational functions $\t H(x,\epsilon)$, $\t K_i(x,\epsilon)$ are related by the same differential equation as the cubic polynomials $H(x)$, $K_i(x)$:
\begin{equation}\label{grad tK}
\nabla \t K_i(x,\epsilon) =B_i \nabla \t H(x,\epsilon).
\end{equation}
As a consequence, they satisfy the same second order differential equations \eqref{d2H} as the polynomials $H(x)$, $K_i(x)$.
\end{theorem}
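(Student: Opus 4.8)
The plan is to establish the gradient relation $\nabla\t K_i(x,\epsilon)=B_i\nabla\t H(x,\epsilon)$ directly from the explicit formulas \eqref{tH alt} and \eqref{tK alt}, and then deduce the second-order equations \eqref{d2H} as an immediate corollary exactly as in the polynomial case. For the corollary: once \eqref{grad tK} holds, differentiating gives $\nabla^2\t K_i=B_i\nabla^2\t H$; since $B_i=\alpha_iI+\beta_iA+\gamma_iA^2$ is a polynomial in $A$, and since $A$ (hence every power of $A$, hence $B_i$) satisfies the intertwining property that turns \eqref{d2H} for $\t H$ into \eqref{d2H} for $\t K_i$ — this is precisely the argument given in Section \ref{sect int fam} for passing from $H_0$ to $H_1$ to $H_2$ — the conclusion follows. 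So the whole content is \eqref{grad tK}, and I would state the corollary in one sentence referring back to that discussion.

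For \eqref{grad tK} itself, I would work from
$$
\t H(x,\epsilon)=2x^{\rm T}J\left(I-\epsilon^2(f'(x))^2\right)^{-1}f(x),\qquad
\t K_i(x,\epsilon)=2x^{\rm T}J\left(I-\epsilon^2(g_i'(x))^2\right)^{-1}g_i(x).
$$
By Lemma \ref{lemma f'f'} the matrix $(f'(x))^2$ is $p_0(x)I+\sum_j q_j(x)B_j^{\rm T}$, a polynomial in $A^{\rm T}$; the same is true of $(g_i'(x))^2=(f'(x))^2$ by \eqref{f'f'=g'g'}, so the two resolvents $\left(I-\epsilon^2(f')^2\right)^{-1}$ and $\left(I-\epsilon^2(g_i')^2\right)^{-1}$ are \emph{equal}. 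Write $R(x)=\left(I-\epsilon^2(f'(x))^2\right)^{-1}$. Then $\t H=2x^{\rm T}JR(x)f(x)$ and $\t K_i=2x^{\rm T}JR(x)g_i(x)=2x^{\rm T}JR(x)B_i^{\rm T}f(x)$ using $g_i=B_i^{\rm T}f_0$. Since $R(x)$ is a polynomial in $A^{\rm T}$ it commutes with $B_i^{\rm T}$ (which is also a polynomial in $A^{\rm T}$), and $B_i^{\rm T}J=JB_i$ by the skew-Hamiltonian property \eqref{cond A}; hence
$$
\t K_i(x,\epsilon)=2x^{\rm T}JB_i R(x)f(x)=2x^{\rm T}B_i^{\rm T}JR(x)f(x).
$$
The simplest route from here is to differentiate. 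Note $\nabla\t H(x,\epsilon)$ can be computed from $\t H=\tfrac1\epsilon x^{\rm T}J\,\t x$ with $\t x=(I-\epsilon f'(x))^{-1}x$; a cleaner path is to use the CMOQ-type identity $\nabla\t H=\tfrac{2}{\epsilon}J^{-1}(\widetilde x-x)/2$ adjusted by symmetry, but in fact the most robust approach is: since $\t H(x,\epsilon)$ is, term by term in $\epsilon^2$, a sum $\sum_k\epsilon^{2k}x^{\rm T}J(f'(x))^{2k+1}x$ of functions of the form $x^{\rm T}M(x)x$ with $M(x)$ a polynomial in $\nabla^2H$ and $J$, and since each such summand is homogeneous, Euler's identity gives $\nabla(x^{\rm T}Mx)$ in closed form. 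Rather than grinding this, I would instead prove \eqref{grad tK} at the level of these generating summands: for each $k$, using $f'=J\nabla^2H$, $g_i'=J\nabla^2K_i=JB_i\nabla^2 H$ together with $\nabla^2K_i=B_i\nabla^2H$ (the $\epsilon=0$ case of what we want, already available from \eqref{grads pq}/\eqref{grad tK}-at-order-zero, i.e. from Section \ref{sect int fam}), one checks that $\nabla\bigl(x^{\rm T}J(g_i')^{2k+1}x\bigr)=B_i\nabla\bigl(x^{\rm T}J(f')^{2k+1}x\bigr)$, exploiting $(g_i')^2=(f')^2$ so the only difference between the two summands is one extra factor $B_i^{\rm T}$ sitting next to $f'$, which upon taking the gradient and using $B_i^{\rm T}J=JB_i$ and the commutation of $B_i$ with $\nabla^2H$ (which is \eqref{d2H} for $H$) migrates out as a left multiplication by $B_i$. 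Summing over $k$ gives \eqref{grad tK}.

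The main obstacle is purely bookkeeping: making the gradient computation of the quadratic-in-$x$ forms $x^{\rm T}J(f'(x))^{2k+1}x$ rigorous, since $f'(x)$ itself depends on $x$, so one must differentiate both the explicit $x$'s and the $x$'s hidden inside the $2k+1$ copies of $\nabla^2H(x)$. I expect that after using homogeneity of $H$ (so $\nabla^2H(x)$ is linear in $x$ and $x^{\rm T}\nabla^2H(x)\cdot(\ )$ relates to $\nabla H$) and the symmetry/skew-symmetry pattern already noted in the derivation of \eqref{tH alt} (which kills the even-$k$ terms), the derivative terms coming from inside the product reorganize into the same shape as the whole expression, and the identity \eqref{grad tK} drops out coefficient by coefficient in $\epsilon^2$. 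An alternative that avoids series entirely: differentiate the closed-form $\t H=\epsilon^{-1}x^{\rm T}J(I-\epsilon f'(x))^{-1}x$ using $d\bigl((I-\epsilon f')^{-1}\bigr)=\epsilon(I-\epsilon f')^{-1}(df')(I-\epsilon f')^{-1}$ and the fact that $f''$ is a constant symmetric tensor; compare with the same computation for $\t K_i$ and $g_i'=B_i^{\rm T}f'$, $g_i''v=f''(B_i^{\rm T}v)$ (Lemma \ref{lemma g''}); the matrix $B_i^{\rm T}$ then factors through all the resolvents (all polynomials in $A^{\rm T}$) and through $J$ (turning into $B_i$ on the other side), yielding \eqref{grad tK}. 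I would present this second route, as it is shorter and uses only identities already proved.
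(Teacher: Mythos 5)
Your route is genuinely different from the paper's. The paper does not differentiate the resolvent formulas directly: it inverts the Poisson tensor (Lemma \ref{lem Pi inv}), expresses $\t H$ and $\t K_i$ as combinations \eqref{tH thru H K}, \eqref{tK thru H K} of the cubic polynomials $H$, $K_i$ with rational coefficients $P_0(x,\epsilon)$, $Q_i(x,\epsilon)$, proves the gradient identities $\nabla P_0=\epsilon^2 B_i\nabla Q_i$ (Lemma \ref{lem grads PQ}, resting on Lemma \ref{lemma grads pq}), and then differentiates these decompositions using $\nabla K_i=B_i\nabla H$ and $B_iB_j=-B_k$. Your plan, by contrast, differentiates the closed-form expressions for $\t H$, $\t K_i$ and pushes $B_i^{\rm T}$ through, using $(g_i')^2=(f')^2$, \eqref{Af'} and Lemma \ref{lemma g''}. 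This is viable and more self-contained (no inversion of $\Pi$ is needed), though the paper's detour buys extra information as by-products (the explicit expression of $\t H$, $\t K_i$ through $H$, $K_i$ and the factorization of $\det(I-\epsilon f'(x))$). Your one-sentence treatment of the second-order equations \eqref{d2H} is acceptable and matches the paper's level of detail there.

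However, the decisive computation is left unexecuted, and your preferred ``second route'' has a concrete gap as stated. If you differentiate $\t H=\epsilon^{-1}x^{\rm T}J(I-\epsilon f'(x))^{-1}x$ and $\t K_i=\epsilon^{-1}x^{\rm T}J(I-\epsilon g_i'(x))^{-1}x$ and compare the directional derivatives $D_v\t K_i$ with $D_{B_i^{\rm T}v}\t H$, the relevant matrices are \emph{not} equal: with $F=f'(x)$, $G=f'(v)$, $R=(I-\epsilon^2F^2)^{-1}$, the inner contributions differ by $\epsilon\, x^{\rm T}J(I-B_i^{\rm T})(FRGR+RGRF)\,x$, and $(I-B_i^{\rm T})(FRGR+RGRF)\neq 0$ in general; so ``$B_i^{\rm T}$ factors through all the resolvents'' does not by itself yield \eqref{grad tK}. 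One must additionally verify that $J(I-B_i^{\rm T})(FRGR+RGRF)$ is skew-symmetric (using \eqref{Ham f'} for $F$ and $G$, $R^{\rm T}J=JR$, $B_iJ=JB_i^{\rm T}$, and the mutual commutativity of $F$, $R$, $B_i^{\rm T}$), so that its quadratic form in $x$ vanishes --- an extra cancellation of exactly the kind the paper invokes in the proof of Theorem \ref{th integrals}. The cleanest repair is to differentiate the even representations instead: with $\t H=x^{\rm T}JR(x)f'(x)x$ and $\t K_i=x^{\rm T}JB_i^{\rm T}R(x)f'(x)x$ (which follow from \eqref{tH alt}, \eqref{tK alt}, \eqref{f'f'=g'g'}), the identity $D_v\t K_i=D_{B_i^{\rm T}v}\t H$ holds with literally equal matrices, using only $f'(B_i^{\rm T}v)=B_i^{\rm T}f'(v)$ (from Lemma \ref{lemma g''} and $(B_i^{\rm T})^2=I$), the commutation \eqref{Af'}, and the symmetry of $JR(x)f'(x)$ and $JB_i^{\rm T}R(x)f'(x)$ to handle the differentiation of the explicit $x$'s; the same bookkeeping settles your series variant term by term in $\epsilon^2$. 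With that step supplied, your argument is complete and constitutes a legitimate alternative proof.
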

\begin{proof}
We start the proof with the derivation of a convenient formula for $\t H(x,\epsilon)$ and $\t K_i(x,\epsilon)$.
From \eqref{tH alt}, \eqref{Pi short}, \eqref{Pi medium} we have:
\begin{equation} \label{tH}
\t H(x,\epsilon) = 2x^{\rm T} J\left(I-\epsilon^2(f'(x))^2\right)^{-1} f(x)  = -2x^{\rm T} (\Pi(x))^{-1}f(x),
\end{equation}
and similarly
\begin{equation} \label{tK}
\t K_i(x,\epsilon) = 2x^{\rm T} J\left(I-\epsilon^2(f'(x))^2\right)^{-1} g_i(x)  = -2x^{\rm T} (\Pi(x))^{-1}g_i(x).
\end{equation}
Now we use the following result.
\begin{lemma} \label{lem Pi inv}
\begin{equation} \label{Pi inv}
(\Pi(x))^{-1}=-P_0(x,\epsilon)J-\epsilon^2\sum_{i=1}^3Q_i(x,\epsilon)B_iJ,
\end{equation}
where
\begin{eqnarray}
P_0(x,\epsilon) & = & \frac{1}{4}\left(\frac{1}{s_0(x,\epsilon)}+\frac{1}{s_1(x,\epsilon)}+\frac{1}{s_2(x,\epsilon)}+\frac{1}{s_3(x,\epsilon)}\right), \label{P0}\\
Q_1(x,\epsilon) & = & \frac{1}{4\epsilon^2}\left(-\frac{1}{s_0(x,\epsilon)}-\frac{1}{s_1(x,\epsilon)}+\frac{1}{s_2(x,\epsilon)}+\frac{1}{s_3(x,\epsilon)}\right), \label{Q1}\\
Q_2(x,\epsilon) & = & \frac{1}{4\epsilon^2}\left(-\frac{1}{s_0(x,\epsilon)}+\frac{1}{s_1(x,\epsilon)}-\frac{1}{s_2(x,\epsilon)}+\frac{1}{s_3(x,\epsilon)}\right), \label{Q2}\\
Q_3(x,\epsilon) & = & \frac{1}{4\epsilon^2}\left(-\frac{1}{s_0(x,\epsilon)}+\frac{1}{s_1(x,\epsilon)}+\frac{1}{s_2(x,\epsilon)}-\frac{1}{s_3(x,\epsilon)}\right), \label{Q3}
\end{eqnarray}
and
\begin{align}
s_0(x,\epsilon) & =  1-\epsilon^2 p_0(x)+\epsilon^2 q_1(x)+\epsilon^2 q_2(x)+\epsilon^2 q_3(x)  =1, \label{s0}\\
s_1(x,\epsilon) & =  1-\epsilon^2 p_0(x)+\epsilon^2 q_1(x)-\epsilon^2 q_2(x)-\epsilon^2 q_3(x)  =1-2\epsilon^2 q_2(x)-2\epsilon^2 q_3(x), \label{s1}\\
s_2(x,\epsilon) & =  1-\epsilon^2 p_0(x)-\epsilon^2 q_1(x)+\epsilon^2 q_2(x)-\epsilon^2 q_3(x)  =1-2\epsilon^2 q_1(x)-2\epsilon^2 q_3(x), \label{s2}\\
s_3(x,\epsilon) & =  1-\epsilon^2 p_0(x)-\epsilon^2 q_1(x)-\epsilon^2 q_2(x)+\epsilon^2 q_3(x)  =1-2\epsilon^2 q_1(x)-2\epsilon^2 q_2(x).\quad
\label{s3}
\end{align}
\end{lemma}

From \eqref{tH} and \eqref{Pi inv}, we find:
\begin{eqnarray}
\t H(x,\epsilon) & = &  2P_0(x,\epsilon)x^{\rm T} J f(x)+ 2\epsilon^2 \sum_{i=1}^3 Q_i(x,\epsilon) x^{\rm T} B_iJf(x)  \nonumber\\
& = & -2P_0(x,\epsilon)x^{\rm T} \nabla H(x) - 2\epsilon^2 \sum_{i=1}^3 Q_i(x,\epsilon) x^{\rm T} \nabla K_i(x)   \nonumber\\
& = & -6P_0(x,\epsilon)H(x)-6\epsilon^2 \sum_{i=1}^3 Q_i(x,\epsilon) K_i(x).  \label{tH thru H K}
\end{eqnarray}
By symmetry of the roles of the associated vector fields (cf. \eqref{assoc triple}), we have:
\begin{equation} \label{tK thru H K}
\t K_1(x,\epsilon) = -6P_0(x,\epsilon)K_1(x)-6\epsilon^2 Q_1(x,\epsilon)H(x) +6\epsilon^2 Q_2(x,\epsilon)K_3(x)+6\epsilon^2  Q_3(x,\epsilon)K_2(x).
\end{equation}

\begin{lemma} \label{lem grads PQ}
For the functions $P_0(x,\epsilon)$ and $Q_i(x,\epsilon)$ $(i=1,2,3)$, the following identities hold:
\beq \label{grads PQ}
\nabla P_0(x,\epsilon)=\epsilon^2 B_i\nabla Q_i(x,\epsilon).
\eeq
\end{lemma}

Now we are in a position to derive \eqref{grad tK}. For this aim, we differentiate formulas \eqref{tH thru H K}, \eqref{tK thru H K}, taking into account  differential equations $\nabla K_i=B_i\nabla H$ and \eqref{grads PQ}. We have:  
$$
-\tfrac{1}{6}\nabla \t H  =  P_0\nabla H+H\nabla P_0 +\epsilon^2\sum_{i=1}^3\big( Q_i\nabla K_i+K_i\nabla Q_i\big) 
$$
so that
\begin{eqnarray*}
-\tfrac{1}{6}B_1\nabla \t H  & = &  P_0B_1\nabla H+HB_1\nabla P_0 \\
&  &   +\epsilon^2 Q_1B_1\nabla K_1+\epsilon^2 K_1B_1\nabla Q_1\\
&  &   +\epsilon^2 Q_2B_1\nabla K_2+\epsilon^2 K_2B_1\nabla Q_2\\
&  &   +\epsilon^2 Q_3B_1\nabla K_3+\epsilon^2 K_3B_1\nabla Q_3\\
& = &  P_0\nabla K_1+\epsilon^2 H\nabla Q_1 \\
&  &   +\epsilon^2 Q_1\nabla H+K_1\nabla P_0\\
&  &   -\epsilon^2 Q_2\nabla K_3-\epsilon^2 K_2\nabla Q_3\\
&  &   -\epsilon^2 Q_3\nabla K_2-\epsilon^2 K_3\nabla Q_2\ = \ -\tfrac{1}{6}\nabla \t K_1.
\end{eqnarray*}
In this derivation we used relations like $B_1\nabla K_2=B_1B_2\nabla H=-B_3\nabla H=-\nabla K_3$ etc.
\end{proof}

{\em Proof of Lemma \ref{lem Pi inv}.} We use the ansatz \eqref{Pi inv} for $\Pi^{-1}$. Multiplying this from the left by 
$$
\Pi(x)=(1-\epsilon^2 p_0(x))J-\epsilon^2\sum_{i=1}^3 q_i(x)B_i^{\rm T}J,
$$
and using relations $B_i^{\rm T}JB_iJ=B_i^{\rm T}B_i^{\rm T}J^2=-I$ and $B_i^{\rm T}JB_jJ=B_i^{\rm T}B_j^{\rm T}J^2=B_k^{\rm T}$, we arrive at a linear combination of matrices $I$, $B_1^{\rm T}$, $B_2^{\rm T}$, $B_3^{\rm T}$. Setting the corresponding coefficients of this linear combination equal to 1, 0, 0, 0, we arrive at the linear system which can be written as
$$
\begin{pmatrix} 1-\epsilon^2 p_0 & \epsilon^2 q_1 & \epsilon^2 q_2 & \epsilon^2 q_3 \\
 \epsilon^2 q_1 & 1-\epsilon^2 p_0 & \epsilon^3 q_3 & \epsilon^2 q_2 \\
\epsilon^2 q_2 & \epsilon^2 q_3 & 1-\epsilon^2 p_0 &  \epsilon^2 q_1 \\
 \epsilon^2 q_3 & \epsilon^2 q_2 & \epsilon^2 q_1 & 1-\epsilon^2 p_0 
\end{pmatrix}
\begin{pmatrix} -P_0 \\ \epsilon^2 Q_1 \\ \epsilon^2 Q_2 \\ \epsilon^2 Q_3 \end{pmatrix}
=\begin{pmatrix} -1 \\ 0 \\ 0 \\ 0 \end{pmatrix}.
$$
One can either solve this linear system directly, or verify by a straightforward and easy check that \eqref{P0}--\eqref{Q3} with \eqref{s0}--\eqref{s3} give its unique solution. 
\qed

A remarkable by-product of this proof is the factorization of the determinant of this linear system:
\begin{equation}
\det(I-\epsilon f'(x))=s_1(x,\epsilon)s_2(x,\epsilon)s_3(x,\epsilon).
\end{equation}
\smallskip

{\em Proof of Lemma \ref{lem grads PQ}.} We note that
\begin{eqnarray*}
\nabla P_0 & = & -\frac{1}{4}\left( \dfrac{1}{s_1^2}\nabla s_1+ \dfrac{1}{s_2^2}\nabla s_2+ \dfrac{1}{s_3^2}\nabla s_3\right),\\
\nabla Q_1 & = & \frac{1}{4\epsilon^2}\left( \dfrac{1}{s_1^2}\nabla s_1- \dfrac{1}{s_2^2}\nabla s_2 -\dfrac{1}{s_3^2}\nabla s_3\right),\\
\end{eqnarray*}
and it is easy to see from \eqref{s1}, \eqref{s2}, \eqref{s3} and from \eqref{grads pq} that
$$
B_1\nabla s_1=-\nabla s_1, \quad B_1\nabla s_2=\nabla s_2, \quad B_1\nabla s_3=\nabla s_3,
$$
which yields $B_1\nabla P_0=\epsilon^2 \nabla Q_1$. \qed

%%%%%%%%%%%%%%%%%%%%%%
%%%%%%%%%%%%%%%%%%%%%%
\section{Conclusions}
%%%%%%%%%%%%%%%%%%%%%%
%%%%%%%%%%%%%%%%%%%%%%

Completely integrable Hamiltonian systems lying at the basis of our constructions, seem to be worth studying on their own. In particular, their invariant 3-dimensional varieties are intersections of three cubic hypersurfaces in the 6-dimensional space. Algebraic geometry of such varieties does not seem to be elaborated very well in the existing literature. It will be interesting to find out whether they are (affine parts) of Abelian varieties, that is, whether our systems are algebraically completely integrable. Still more interesting and intriguing are the algebraic-geometric aspects of the commuting systems of integrable maps introduced here. This will be the subject of our future research. 

\section{Acknowledgements}

This research is supported by the DFG Collaborative Research Center TRR 109 ``Discretization in Geometry and Dynamics''.

%%%%%%%%%%%%%%%%%%%%%%
%%%%%%%%%%%%%%%%%%%%%%

%%%%%%%%%%%%%%%%%%%%%%
%%%%%%%%%%%%%%%%%%%%%%

\begin{thebibliography}{}
%%%%%%%%%%%%%%%%%%%%%%
%%%%%%%%%%%%%%%%%%%%%%
\bibitem{CMOQ1}
E. Celledoni, R.I. McLachlan, B. Owren, G.R.W. Quispel.
{\em Geometric properties of Kahan's method}, 
J. Phys. A {\bf 46} (2013), 025201, 12 pp.

\bibitem{CMOQ2}
E. Celledoni, R.I. McLachlan, D.I. McLaren, B. Owren, G.R.W. Quispel.
{\em Integrability properties of Kahan's method}, 
J. Phys. A {\bf{47}} (2014), 365202, 20 pp.

\bibitem{CMOQ3}
E. Celledoni, R.I. McLachlan, D.I. McLaren, B. Owren, G.R.W. Quispel.
{\em Discretization of polynomial vector fields by polarization},
Proc. R. Soc. A {\bf 471} (2015), 20150390, 10 pp.

\bibitem{Dui}
J.J. Duistermaat. {\em Discrete Integrable Systems. QRT Maps and Elliptic Surfaces},
Springer, 2010, xii+627 pp.

\bibitem{skewHam}
H. Fa{\ss}bender, D.S. Mackey, N. Mackey, H. Xu. {\em Hamiltonian square roots of skew-Hamiltonian matrices}, 
Lin. Alg. Appl. {\bf 287} (1999), 125--159.

\bibitem{HK}
R.~Hirota, K.~Kimura.
{\em Discretization of the Euler top},
J. Phys. Soc. Japan {\bf 69} (2000), No. 3, 627--630.

\bibitem{K}
W.~Kahan.
{\em Unconventional numerical methods for trajectory calculations},
Unpublished lecture notes, 1993.

\bibitem{KH}
K.~Kimura, R.~Hirota.
{\em Discretization of the Lagrange top},
J. Phys. Soc. Japan {\bf 69} (2000), No. 10, 3193--3199.

\bibitem{PPS1}
M.~Petrera, A.~Pfadler, Yu.B.~Suris.
{\em On integrability of Hirota-Kimura-type discretizations: experimental study of the discrete Clebsch system},
Exp. Math. {\bf 18} (2009), No. 2, 223--247.

\bibitem{PPS2}
M. Petrera, A. Pfadler, Yu.B. Suris.
{\em On integrability of Hirota-Kimura type discretizations},
Regular Chaotic Dyn. {\bf 16} (2011), No. 3-4, p. 245--289. 

\bibitem{PS}
M.~Petrera, Yu.B.~Suris.
{\em On the Hamiltonian structure of Hirota-Kimura discretization of the Euler top},
 Math. Nachr. {\bf 283} (2010), No. 11, 1654--1663.

\bibitem{PS4dim} 
M. Petrera, Yu.B. Suris. 
{\em A construction of a large family of commuting pairs of integrable symplectic birational 4-dimensional maps},
{\tt arXiv:1606.08238 [nlin.SI].}
 
\bibitem{QRT}
G.R.W. Quispel, J.A.G. Roberts, C.J. Thompson. 
{\em  Integrable mappings and soliton equations II}, 
Physica D {\bf 34} (1989) 183--192.
 
 \bibitem{SS}
J.M.~Sanz-Serna. 
{\em An unconventional symplectic integrator of W. Kahan}, 
Appl. Numer. Math. {\bf 16} (1994), 245--250.


\end{thebibliography}
\end{document}